\def\wid{\check{{\cc@style\underline{\mskip9.5mu}}}}
\def\Wideubar{\underaccent{{\cc@style\underline{\mskip6mu}}}}
\def\wideubar{\underaccent{{\cc@style\underline{\mskip9.5mu}}}}
\def\Wideubar{\underaccent{{\cc@style\underline{\mskip6mu}}}}
\def\widebar{\accentset{{\cc@style\underline{\mskip9.5mu}}}}
\def\Widebar{\accentset{{\cc@style\underline{\mskip6mu}}}}
\newtheorem{Lemma}{Lemma}
\newtheorem{Theorem}{Theorem}
\newtheorem{Assumption}{Assumption}
\newtheorem{Example}{Example}
\theoremstyle{Remark}\newtheorem{Remark}{Remark}
\begin{document}
\title{\LARGE \bf Data-driven Control of Dynamic Event-triggered Systems with  Delays}

\author{
	Xin Wang,~Jian Sun,
\IEEEmembership{Senior Member, IEEE},
Julian Berberich,
Gang Wang, \IEEEmembership{Member,~IEEE},
\\
Frank Allg{\"o}wer,
and~Jie Chen,
\IEEEmembership{Fellow,~IEEE}
\thanks{This work was supported in part by the National Key R$\&$D Program of China under Grant 2018YFB1700100, the National Natural Science Foundation of China under Grants 61925303, 62088101, {\color{blue}62173034,} U20B2073, 61720106011, and the Deutsche Forschungsgemeinschaft (DFG, German Research Foundation) {\color{blue}under Germany's Excellence Strategy - EXC 2075 - 390740016 and under grant 468094890. We acknowledge the support by the Stuttgart Center for Simulation Science (SimTech). The authors thank the International Max Planck Research School for Intelligent Systems (IMPRS-IS) for supporting J. Berberich.}}
\thanks{
X. Wang, J. Sun, and G. Wang are with the State Key Lab of Intelligent Control and Decision of Complex Systems, School of Automation, Beijing Institute of Technology, Beijing 100081, China, {\color{blue}and also with the Beijing Institute of Technology Chongqing Innovation Center, Chongqing 401120, China} (e-mail: xinwang@bit.edu.cn, sunjian@bit.edu.cn, gangwang@bit.edu.cn).


J. Berberich and F. Allg{\"o}wer are with the Institute for Systems Theory and Automatic Control, University of Stuttgart, 70550 Stuttgart, Germany (e-mail: julian.berberich@ist.uni-stuttgart.de, frank.allgower@ist.uni-stuttgart.de).


J. Chen is with the Department of Control Science and Engineering, Tongji University, Shanghai 201804, China, and also with the State Key Lab of Intelligent Control and Decision of Complex Systems, School of Automation, Beijing Institute of Technology, Beijing 100081, China 	
(e-mail: chenjie@bit.edu.cn).}
}

\maketitle

\begin{abstract}
This paper studies data-driven control of unknown sampled-data systems with communication delays under an event-triggering transmission mechanism. Data-based representations for time-invariant linear systems with known or unknown system input matrices are first developed, along with a novel class of dynamic triggering schemes for sampled-data systems with time delays. A model-based stability condition for the resulting event-triggered time-delay system is established using a looped-functional approach.
Combining this model-based condition with the data-driven representations, data-based stability conditions are derived.
Building on the data-based conditions, methods for co-designing the controller gain and the event-triggering matrix are subsequently provided for both cases with or without using the input matrix.
Finally, \textcolor{blue}{numerical examples are presented to corroborate the usefulness of additional prior knowledge of the input matrix in reducing the conservatism of stability conditions,} as well as the merits of the proposed data-driven event-triggered control schemes relative to existing results.
\end{abstract}

\begin{IEEEkeywords}
Data-driven control, event-trigger, stability, sampled-data systems
\end{IEEEkeywords}

\section{Introduction}
\label{sec:introduction}
Sampled-data control has aroused great interests in the study of networked control systems due to its convenience in system design and analysis \cite{Chen1995,eng2022}.
A fundamental concept for sampled-data systems is so-called maximum sampling interval (MSI) for maintaining system stability.
In the literature, previous approaches for analyzing sampled-data systems and computing the MSIs can be generally categorized into four groups \cite{HETEL2017}, which include the discrete-time approach \cite{Fuj2009}, impulsive systems approach \cite{Nag2008}, input/output stability approach \cite{L2007}, and input-delay approach \cite{Fridman2010}.
On top of the discrete-time and input-delay approaches, a looped-functional approach was recently developed in \cite{Seuret2012,Briat2012}.
This proposal has been shown to yield (markedly) improved stability conditions by introducing a looped-functional into the common Lyapunov functional, 
 where the looped-functional is required to be continuous at sampling points but not necessarily to be positive definite; see, e.g., using a two-sided looped-functional in \cite{ZENG2017328},
an integral-based looped-functional in \cite{Lee20177}, and an extended looped-functional in \cite{Park2020}. In sampled-data systems, communication delays are natural, inevitable, and they have a considerable impact on the MSI \cite{Lee2020,Sun2010}.
Recent efforts in \cite{Lee2020,Zeng2019,Zhangck2014} have been devoted to analyzing sampled-data systems with communication delays by means of the looped-functional approach. 

It is worth stressing that all the aforementioned results are model-based, in the sense that they require explicit knowledge of the underlying system models.
Nonetheless, obtaining an accurate system model is challenging and may even be impossible in many real-world applications. \textcolor{blue}{
System identification \cite{Ljung1986} can be used to determine a model based on measured data. However, it is difficult to obtain tight error bounds in system identification when only noisy data of finite length are available \cite{Matni2019}. Therefore, as an alternative, direct data-driven control without prior identification has received increasing attention in recent years, see \cite{Hou2013} for a survey.}
A number of data-driven approaches have been put forth for different control problems using the celebrated Fundamental Lemma \cite{WILLEMS2005}, including state-feedback and optimal control in \cite{persis2020,van2020}, robust control in \cite{berberich2020robust,Waarde2020},
predictive control in \cite{Coulson2019,berberich2019a,liu2021data,Zhou2021sampling}.
An extension studying data-driven system stabilization of as well as robustness guarantees for systems with time-delays and measurement noise was recently investigated in \cite{rueda2020data}.
Moreover, data-driven approaches for estimating the MSI and designing sampled-data controllers under aperiodic sampling were developed in \cite{wildhagen2021datadriven} and \cite{wildhagen2021improved}.
Wedding the data-driven system representation with the time-delay approach, a data-based stability condition for \emph{continuous-time} sampled-data control systems was derived in \cite{Berberich2020}, along with a data-driven controller.
Leveraging the stability condition, MSIs under different levels of noise can be computed.
Nonetheless, these results are conservative due to the fact that the data-based condition in \cite{Berberich2020} follows from the model-based one in \cite{Fridman2010}.
Improved MSIs can be expected if less conservative model-based stability conditions are employed through looped-functional approach. 

In real sampled-data systems, the digital communication network has limited bandwidth. Reducing ``redundant" data transmissions can be  helpful for saving communication resources while maintaining desired control performance.
To this aim, an event-triggering transmission strategy was proposed in \cite{Tabuada2007,Chen2021event}.
Its key idea is to monitor the system continuously, but transmit sampled data only when ``necessary'' according to some predefined criterion. Following this idea, a multitude of sampling-based event-triggering schemes were devised, including e.g., discrete event-trigger \cite{Yue2013}, dynamic event-trigger \cite{Girard2015}, dynamic periodic event-trigger \cite{Liu2018}. From an energy equipartition perspective,  energy-event-triggered control of cyber-physical network systems was developed in \cite{zeng2015energy} for robust and fast stabilization.
\textcolor{blue}{However,
these existing event-triggered control methods all require explicit system models.
In particular, they cannot handle uncertainty which may arise from an identification step in case only data are available.
To the best of the authors' knowledge,
it remains an untapped field to design data-driven event-triggering schemes as well as controllers for unknown sampled-data control systems from noisy data with rigorous stability guarantees.}

These recent advances and developments have motivated our work in this paper, which is focused on data-driven event-triggered control of sample-data systems with communication delays.
\textcolor{blue}{Motivated by dynamic periodic event-triggering from \cite{Liu2018}, we develop a discrete-time dynamic periodic event-triggering scheme that correlates with communication delays to save transmission resources.}
Subsequently, model- and data-based stability conditions for sampled-data systems with delays under the triggering scheme are obtained by employing the looped-functional approach.
\textcolor{blue}{
Based on the model-based criteria, data-driven co-design methods of the controller gain and the triggering matrix are given for both cases having known or unknown input matrices.
Possibly large MSIs can be obtained, similar to \cite{Berberich2020}, by iteratively solving two semi-definite programs which relax the underlying non-convex design problem. 
Alternative methods using an algebraically equivalent systems \cite{ZHANG201555} are then provided to avoid such iterative computation, where the design conditions are convex in all decision variables.}
 \textcolor{blue}{Furthermore, it has been shown in \cite{berberich2020combining} that leveraging prior system knowledge could provide less conservative system analysis and design results than purely data-driven approaches.}
 Complementing the system representation in \cite{Berberich2020}, a new data-dependent parametrization with known input matrices is established.
It turns out that incorporating the prior knowledge of the input matrix can help to further reduce conservatism of the resulting data-based stability conditions, which is confirmed by our numerical results.

In words, the main contributions of this work are summarized as follows.
{\color{blue}
\begin{enumerate}
\item [\textbf{c1)}] A novel dynamic event-triggering scheme for sampled-data systems with delays, where the dynamic threshold variable only changes at sampling points, eliminating the burden of continuously computing the dynamic variable in \cite{Girard2015,Liu2018};
\item [\textbf{c2)}] Model- and data-based stability conditions in the form of linear matrix inequalities (LMIs) for event-triggered control systems with delays based on a novel looped-functional; and,
\item [\textbf{c3)}] Data-driven approaches to co-designing the controller gain and the triggering matrix based on a data-based system parametrization for both cases with known and unknown input matrices. 
\end{enumerate}}
Practical merits and effectiveness of the proposed data-driven event-triggered control as well as stability conditions are finally demonstrated through numerical comparison using two simulated examples.

The remainder of this paper is structured as follows. In Section \ref{Sec:preliminaries}, we introduce data-based system representations, as well as put forth a dynamic event-triggering scheme for sampled-data systems with noise and time delays. Based on the results in Section \ref{Sec:preliminaries}, both model- and data-based stability conditions are developed in Section \ref{sec:mainresults}.
Subsequently, methods for co-designing the controller gain and the triggering matrix are discussed. Section \ref{sec:example} validates the merits and application of our methods and conditions on practical systems. Finally, Section \ref{sec:conclude} draws concluding remarks.

{\it Notation.}
Throughout this paper, $\mathbb{N}$, $\mathbb{R^+}$, $\mathbb{R}^n$, and $\mathbb{R}^{n\times m}$ denote the sets of all non-negative
integers, non-negative real numbers, $n$-dimensional vectors, and ${n\times m}$ real matrices, respectively.
Symbol $P\succ 0$ ($P\succeq 0$) means $P$ is a symmetric positive (semi)definite matrix;
${\rm diag}\{\cdots\}$ denotes a block-diagonal matrix;
${\rm Sym}\{P\}$ represents the sum of $P^{T}$ and $P$. Let
${\rm col} \{\cdots\}$ represent column vectors; $0$ and $I$ stand for zero and identity matrices
of appropriate
dimensions.
Notation `$\ast$' denotes the symmetric term in symmetric block matrices;
We use $\|\cdot\|$ to denote the Euclidean norm of a vector.
$t_{k}^-$ means that $t$ approaches $t_k$ from left. {\color{blue}$\{\tau_j-d\}_{j}$ represents the set $\{\tau_0-d, \tau_1-d, \tau_2-d, \dots\}$. We write $[\cdot]$ if elements in the matrix can be inferred by symmetry.}

\section{Preliminaries}\label{Sec:preliminaries}
\subsection{Sampled-data control with unknown system matrices}


Consider the following linear time-invariant system 
\begin{equation}\label{sys:LTI}
\dot{x}(t)=A x(t)+B u(t)
\end{equation}
where $t\geq0$, $ x(t)\in \mathbb{R}^{n}$ is the system state, and $u(t)\in \mathbb{R}^{m}$ is the control input. $A\in \mathbb{R}^{n\times n}$ and $B\in \mathbb{R}^{n\times m}$ are constant system matrices.
An {\emph {unknown}} \textcolor{blue}{constant} time-delay $d$ in the network communication is considered and it satisfies $0\leq \underline{d}\leq d\leq\bar{d}$, where the bounds $\underline{d}$ and $\bar{d}$ are given constants.
We assume that some pre-collected
state-input measurements are available.
Then two different assumptions on system matrices $A$ and $B$ are considered in this paper, which are formalized as follows.
\begin{Assumption}\label{Ass:matrix:AB}
The matrices $A$ and $B$ are unknown.
\end{Assumption}
\begin{Assumption}\label{Ass:matrix:A}
The matrix $A$ is unknown, and $B$ is known.
\end{Assumption}
Our main goal is to design a fixed control gain matrix $K\in\mathbb{R}^{m\times n}$ such that the state-feedback control signal $u(t)=Kx(t-d)$
stabilizes system \eqref{sys:LTI} under Assumption \ref{Ass:matrix:AB} or \ref{Ass:matrix:A}.
Suppose that the control signals are released at discrete time instants
\begin{align}\label{sys:instants}
0\leq d= t_0<t_1<\dots<t_k<\dots<\infty.
\end{align}
The sampled-data system in closed loop can be written as
\begin{equation}\label{sys:sampling}
\left\{
\begin{array}{lll}
\dot{x}(t)=A x(t)&, &t\in[0,t_0)\\
\dot{x}(t)=A x(t)+BKx(t_k-d)&, &t\in[t_k,t_{k+1}).\\
x(t)=0&, &t\in[-d,0)
\end{array}
\right.
\end{equation}

A framework for system \eqref{sys:sampling} is presented in Fig. \ref{FIG:structure}. Two differences relative to classic sampled-data systems can be noted: i) including a dynamic event-triggering transmission module to determine the transmission sequence $\{t_k\}_{k\in \mathbb{N}}$, and ii) a data-based system representation for co-designing the event-triggering scheme and the controller. {\color{blue}The data-based system representation is constructed based on the data collected offline at discrete instants $\{T_i\}_i$. These sampling times are independent of the closed-loop sampling times $\{t_k\}_k$ at which the controller is updated.
Next, we elaborate on the data-based representation and the designed event-triggering transmission scheme, separately.}
\begin{figure}
	\centering	 \includegraphics[width=\columnwidth]{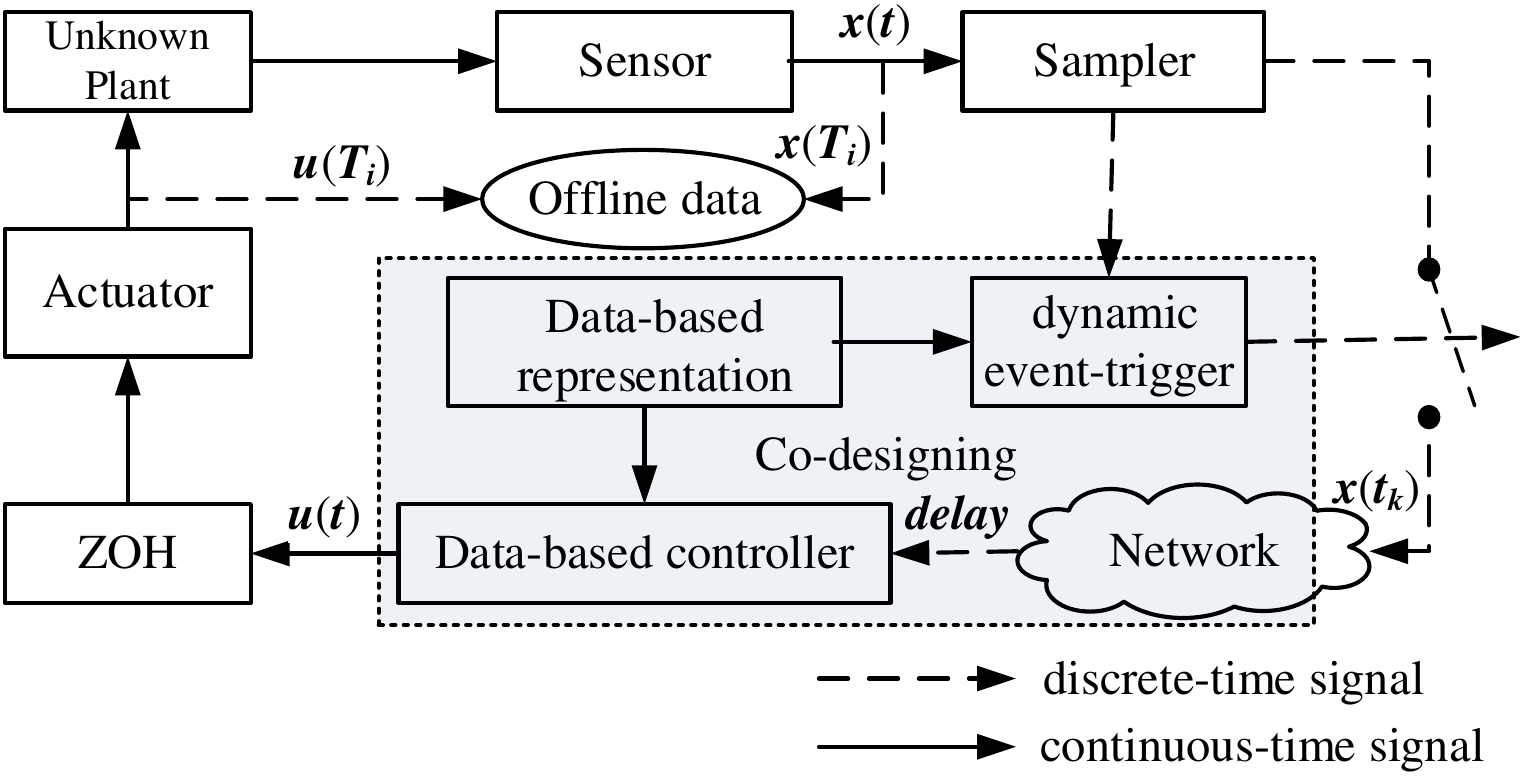}
	\caption{\color{blue}Graphical illustration of the proposed approach.}
	\label{FIG:structure}
\end{figure}
\subsection{Data-based system representation with noise}

Under Assumption \ref{Ass:matrix:AB}, a main challenge is that both system stability analysis and controller design should be performed without knowledge of the system matrices $A$ and $B$.
To this end, we recall the data-based representation developed in \cite{Waarde2020,Berberich2020} for system \eqref{sys:LTI}.
\textcolor{blue}{
Suppose that sampled measurements $\{\dot x(T_i),\,x(T_i),\,u(T_i)\}^{\rho}_{i=1}$ $(i,\,\rho \in \mathbb{N})$ of the perturbed system}
\begin{equation}\label{sys:data:perturbed}
\dot{x}(t)=A x(t)+B u(t)+B_ww(t)
\end{equation}
are available at discrete time instants $T_i \in [0,t]$, where $B_w\in \mathbb{R}^{n\times n_w}$ is assumed known and has full column rank. This is without loss of generality since one can also lump $B_{w}$ into the additive noise $w(t)$.
Here,
the measured system data are corrupted by the {\it unknown} noise (perturbation) sequence $\{w(T_i)\}^{\rho}_{i=1}$, where $w(t)\in \mathbb{R}^{n_w}$ captures the unknown noise or the unmodeled system dynamics.
These measurements can be collected to form the following data matrices
\begin{align*}
\dot X&:=[\dot x(T_1)~~\dot x(T_2)~~\dots~~ \dot x(T_\rho) ]\\
X&:=[x(T_1)~~x(T_2)~~\dots~~ x(T_\rho) ]\\
U&:=[u(T_1)~~u(T_2)~~\dots~~ u(T_\rho) ]\\
W&:=[w(T_1)~~w(T_2)~~\dots~~ w(T_\rho) ]
\end{align*}
where $\dot{X}$, $X$, and $U$ are available while $W$ is unknown. Then,
it is evident that
\begin{align}\label{formu:data}
\dot X=AX+BU+B_wW.
\end{align}
In practice, the noise is typically bounded.
We make a standing assumption on the noise that was also used in  {\cite{Waarde2020}} and \cite{Berberich2020}.
\begin{Assumption}[\emph {Noise bound}]\label{Ass:disturbance}
The noise sequence $\{w(T_i)\}^{\rho}_{i=1}$ collected in the matrix $W$ satisfies $W\in\mathcal{W}$ with
\begin{align}\label{data:disturbance}
\mathcal{W}=\bigg\{W\in\mathbb{R}^{n_w\times\rho} \Big |
\left[\!\begin{array}{cc}W^T\\I \\\end{array}\!\right]^T
  \left[\!\begin{array}{cc}Q_d\! & \!S_d\\\ast\! & \!R_d \\\end{array}\!\right]
  \left[\!\begin{array}{cc}W^T\\I \\\end{array}\!\right]\succeq0 \bigg\}
\end{align}
for some known matrices $Q_d \prec 0 \in \mathbb{R}^{\rho\times \rho}$, $S_d \in \mathbb{R}^{\rho\times n_w}$, and $R_d=R_d^T \in \mathbb{R}^{n_w\times n_w}$.
\end{Assumption}

\begin{Remark}[\emph {Explanation of noise}]\label{remark:noise}
Assumption \ref{Ass:disturbance} provides a general form for modeling bounded additive noise, which has been used in similar form by \cite{persis2020,Berberich2020,Waarde2020,rueda2020data}. {\color{blue}The matrices $Q_d$, $S_d$, $R_d$ are chosen depending on the prior knowledge on the noise.
For example, if $w(t)\in [-\bar{w},\bar{w}]$ for $\bar{w}\geq0$, \eqref{data:disturbance} boils down to
\begin{align}
WW^T=\sum \limits_{i=1}^{i=\rho} w(T_i) w^T(T_i)\preceq \bar{w}^2\rho I
\end{align}
where $Q_d=-I$, $R_d=0$, and $R_d=\bar{w}^2\rho I$. The noise bound in Assumption \ref{data:disturbance} has been employed in many recent works on data-driven control, and allows for various extensions (e.g.,
using multipliers instead of constant matrices $Q_d$, $S_d$, and $R_d$ in \cite{berberich2020combining}).
}
\end{Remark}
Under the model in \eqref{formu:data} and Assumption \ref{Ass:disturbance}, we define $\Sigma_s$ as the set of all pairs $(A~B)$ adhering to the measured data and the noise bound, namely
\begin{align}\label{sys:data}
\Sigma_s:=\{(A~ B)\mid \dot X=AX+BU+B_wW,~ W\in \mathcal{W}\}.
\end{align}

Through a simple mathematical replacement method, an equivalent expression of $\Sigma_s$ in the form of a quadratic matrix inequality (cf. \cite[Theorem 1]{Berberich2020})
is given below.
\begin{Lemma}[\emph {Data-based representation}] \label{Lemma:system:data}
The set $\Sigma_s$ in \eqref{sys:data} is equivalent  to
\begin{align}\label{data:represent}
\Sigma_s=\bigg\{[A~B]\in\mathbb{R}^{\color{blue}n\times (n+m)} \Big |
\left[\!\begin{array}{cc}[A~B]^T\\I \\\end{array}\!\!\!\right]^T
  \!\Theta_s\!
  \left[\!\begin{array}{cc}[A~B]^T \\
   I\\\end{array}\!\!\!\right]\succeq0
\bigg\}
\end{align}
where
\begin{align}
\Theta_s=\left[\!\begin{array}{cc}Q_c\! & \!S_c\\\ast\! & \!R_c \\\end{array}\!\right]:=\left[\!\begin{array}{cc}-\mathcal{Z}\! & \!0\\ \dot X\! &\!B_w \\\end{array}\!\right]
  \left[\!\begin{array}{cc}Q_d\! & \!S_d\\\ast\! & \!R_d \\\end{array}\!\right]
  \left[\!\begin{array}{cc}-\mathcal{Z}\! & \!0\\ \dot X\! &\!B_w\\\end{array}\!\right]^T
  \end{align}
  with $\mathcal{Z}:={\rm col}\{X, \,U\}$.
\end{Lemma}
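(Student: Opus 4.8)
The plan is to establish the claimed set equality by proving the two inclusions, both driven by a single block-multiplication identity. For an arbitrary $[A~B]$, set $\mathcal{Z}={\rm col}\{X,U\}$ and $N:=\dot X-[A~B]\mathcal{Z}=\dot X-AX-BU$. First I would verify by direct computation that
\[
\begin{bmatrix}-\mathcal{Z} & 0\\ \dot X & B_w\end{bmatrix}^{T}\begin{bmatrix}[A~B]^{T}\\ I\end{bmatrix}=\begin{bmatrix}N^{T}\\ B_w^{T}\end{bmatrix},
\]
whence the quadratic form in \eqref{data:represent} collapses to
\[
\begin{bmatrix}[A~B]^{T}\\ I\end{bmatrix}^{T}\Theta_s\begin{bmatrix}[A~B]^{T}\\ I\end{bmatrix}=\begin{bmatrix}N & B_w\end{bmatrix}\begin{bmatrix}Q_d & S_d\\ \ast & R_d\end{bmatrix}\begin{bmatrix}N^{T}\\ B_w^{T}\end{bmatrix}.
\]
This identity holds for every $[A~B]$ and is the bridge between the two descriptions; it is routine but the block dimensions must be tracked with care.

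For the inclusion $\Sigma_s\subseteq\{\cdots\}$, I would take $[A~B]\in\Sigma_s$ together with the associated $W\in\mathcal{W}$ and use the defining relation \eqref{formu:data} in the form $B_wW=N$. Substituting $N=B_wW$ factors the right-hand side as $B_w\Delta B_w^{T}$ with
\[
\Delta:=\begin{bmatrix}W^{T}\\ I\end{bmatrix}^{T}\begin{bmatrix}Q_d & S_d\\ \ast & R_d\end{bmatrix}\begin{bmatrix}W^{T}\\ I\end{bmatrix},
\]
which is exactly the noise certificate of \eqref{data:disturbance}; hence $\Delta\succeq0$ and the form is $\succeq0$ by congruence. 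This direction uses neither $Q_d\prec0$ nor the rank of $B_w$.

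For the reverse inclusion, assume the form in \eqref{data:represent} is $\succeq0$. The step I expect to be the main obstacle is that membership in $\Sigma_s$ requires a $W$ solving $B_wW=N$, which is possible only if the columns of $N$ lie in $\mathrm{range}(B_w)$ --- something not given a priori and which must be extracted from the inequality. To force it, I would compress the form by a matrix $B_w^{\perp}$ whose columns span the orthogonal complement of $\mathrm{range}(B_w)$: since $B_w^{T}B_w^{\perp}=0$, every term carrying a factor $B_w$ drops out and there remains $(B_w^{\perp})^{T}NQ_dN^{T}B_w^{\perp}\succeq0$. Because $Q_d\prec0$ this same matrix is $\preceq0$, so it vanishes, and negative definiteness of $Q_d$ then gives $N^{T}B_w^{\perp}=0$, i.e.\ $N\in\mathrm{range}(B_w)$. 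This is precisely where the hypothesis $Q_d\prec0$ is indispensable.

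With $N\in\mathrm{range}(B_w)$ secured, I would define $W:=B_w^{+}N$ using the left inverse $B_w^{+}=(B_w^{T}B_w)^{-1}B_w^{T}$, which exists since $B_w$ has full column rank; then $B_wW=N$, so the data equation $\dot X=AX+BU+B_wW$ holds. Finally, applying the identity once more writes the form as $B_w\Delta B_w^{T}$ for the certificate $\Delta$ of this $W$, and pre- and post-multiplying by $B_w^{+}$ recovers $\Delta=B_w^{+}(\mathrm{form})(B_w^{+})^{T}\succeq0$ by congruence, so $W\in\mathcal{W}$ and $[A~B]\in\Sigma_s$. Combining both inclusions yields \eqref{data:represent}.
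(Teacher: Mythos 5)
Your proof is correct. Note that the paper itself does not actually prove this lemma: it dismisses it as following "through a simple mathematical replacement method" and defers to the cited reference (Theorem 1 of the continuous-time data-driven paper by Berberich et al.). That replacement argument coincides with your forward direction --- substituting $B_wW=\dot X-AX-BU$ into the noise QMI of Assumption \ref{Ass:disturbance} and factoring the quadratic form as $B_w\Delta B_w^T$ --- and it is immediately reversible when $B_w$ is square and invertible, since then $W=B_w^{-1}N$ is uniquely determined by $[A~B]$. What you add beyond the paper's treatment is a careful reverse inclusion for the general full-column-rank case: the observation that membership in $\Sigma_s$ requires solvability of $B_wW=N$, and that this range condition is not free but must be extracted from the inequality by compressing with a basis $B_w^{\perp}$ of $\mathrm{range}(B_w)^{\perp}$, where $Q_d\prec0$ forces $(B_w^{\perp})^TNQ_dN^TB_w^{\perp}=0$ and hence $N^TB_w^{\perp}=0$; the certificate $\Delta\succeq0$ is then recovered by congruence with the left inverse $B_w^{+}$. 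This is precisely the step the phrase "simple replacement" glosses over, and it is the only place where the standing hypotheses $Q_d\prec0$ and $\mathrm{rank}(B_w)=n_w$ do real work (as you correctly note, the forward inclusion needs neither). In short: your argument is a complete, self-contained proof of the equivalence that the paper only asserts by citation, and it is consistent with --- but strictly more detailed than --- the route the paper points to.
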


For the sake of stability analysis in subsequent sections,
a technical assumption similar to {\cite[Assumption 3]{Berberich2020}} on the matrix $\Theta_s$ in Lemma \ref{Lemma:system:data} is made below.
\begin{Assumption}\label{Ass:matrix}
The matrix $\Theta_s$ is invertible and has $n_w$ positive eigenvalues.
\end{Assumption}

Lemma \ref{Lemma:system:data} provides a purely data-based representation of system \eqref{sys:LTI} with unknown matrices $A$ and $B$ (cf. Assumption \ref{Ass:matrix:AB}) using noisy data $\dot X$, $X$,  and $U$.
{If additional prior knowledge on the system is given, then conservatism can be reduced in the co-design step of controller and triggering mechanism. Specifically, we consider the case of knowing the input matrix $B$ (cf. Assumption \ref{Ass:matrix:A}), which allows us to state significantly less conservative design conditions.
To treat the data-driven control under Assumption \ref{Ass:matrix:A},
the set of all matrices $A$ consistent with the model in \eqref{formu:data} and the noise bound in Assumption \ref{Ass:disturbance} is defined as follows.}
\begin{align}\label{sys:data:k}
\bar{\Sigma}_s:=\{A\in\mathbb{R}^{n\times n} \mid \dot X=AX+BU+B_wW,~W\in \mathcal{W}\}.
\end{align}
Similar to Lemma \ref{Lemma:system:data}, an equivalent representation for $\bar{\Sigma}_s$ is given in the following.

\begin{Lemma}[\it{Data-based representation with known input matrix $B$}]\label{Lemma:system:data:K}
The set in \eqref{sys:data:k} is equal to
\begin{align}\label{data:represent:K}
\bar{\Sigma}_s=\bigg\{A\in\mathbb{R}^{n\times n}  \Big |
\left[\begin{array}{cc}A^T\\I \\\end{array}\right]^T
  \bar{\Theta}_s
  \left[\begin{array}{cc}A^T\\I \\\end{array}\right]\succeq0
\bigg\}
\end{align}
where
\begin{align}
\bar{\Theta}_s=\left[\!\begin{array}{cc}\bar{Q}_c\! & \!\bar{S}_c\\\ast\! & \!\bar{R}_c \\\end{array}\!\right]:=\left[\!\begin{array}{cc}-X\! & \!0\\ \mathcal{U}\! &\!B_w \\\end{array}\!\right]
  \left[\!\begin{array}{cc}Q_d\! & \!S_d\\\ast\! & \!R_d \\\end{array}\!\right]
  \left[\!\begin{array}{cc}-X\! & \!0\\ \mathcal{U}\! &\!B_w\\\end{array}\!\right]^T
  \end{align}
  with $\mathcal{U}:=\dot X-BU$.
\end{Lemma}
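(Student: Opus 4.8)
The plan is to mirror the argument behind Lemma \ref{Lemma:system:data} (equivalently \cite[Theorem 1]{Berberich2020}), the only structural change being that the now-\emph{known} term $BU$ is absorbed into the data, leaving $A$ as the sole unknown. First I would fix an arbitrary $A\in\mathbb{R}^{n\times n}$ and introduce the residual $E:=\mathcal{U}-AX=\dot X-BU-AX$, so that by \eqref{formu:data} the membership $A\in\bar{\Sigma}_s$ in \eqref{sys:data:k} is, \emph{by definition}, equivalent to the existence of some $W\in\mathcal{W}$ with $B_wW=E$. The task is then to show that this solvability-plus-noise-bound condition coincides exactly with the quadratic matrix inequality in \eqref{data:represent:K}.

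The second step is a direct substitution identity. Expanding the product one checks that $\left[\begin{array}{cc}-X&0\\\mathcal{U}&B_w\end{array}\right]^T\left[\begin{array}{c}A^T\\I\end{array}\right]=\left[\begin{array}{c}E^T\\B_w^T\end{array}\right]$, whence $\left[\begin{array}{c}A^T\\I\end{array}\right]^T\bar{\Theta}_s\left[\begin{array}{c}A^T\\I\end{array}\right]=\left[\begin{array}{cc}E&B_w\end{array}\right]\left[\begin{array}{cc}Q_d&S_d\\\ast&R_d\end{array}\right]\left[\begin{array}{c}E^T\\B_w^T\end{array}\right]$. This rewrites the defining QMI of \eqref{data:represent:K} purely in terms of $E$ and the data. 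For the forward inclusion ($A\in\bar{\Sigma}_s\Rightarrow$ QMI), I would take a witness $W\in\mathcal{W}$, use $E=B_wW$ to factor the right-hand side as $B_w N B_w^T$ with $N:=\left[\begin{array}{c}W^T\\I\end{array}\right]^T\left[\begin{array}{cc}Q_d&S_d\\\ast&R_d\end{array}\right]\left[\begin{array}{c}W^T\\I\end{array}\right]\succeq0$ by Assumption \ref{Ass:disturbance}; positivity is preserved under the congruence, giving the QMI.

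The converse is where the main work lies, and it hinges on $Q_d\prec0$ together with $B_w$ having full column rank. Testing the QMI against any $z$ with $B_w^Tz=0$ annihilates every $B_w$-term and leaves $z^TE Q_d E^T z\le0$, which can be nonnegative only if $E^Tz=0$; hence QMI $\succeq0$ forces $\mathrm{range}(E)\subseteq\mathrm{range}(B_w)$. This guarantees that $W:=B_w^\dagger E$ (with $B_w^\dagger$ a left inverse, which exists by full column rank) solves $B_wW=E$, and resubstituting yields QMI $=B_w N B_w^T\succeq0$. Since $B_w^T$ is then surjective, choosing for any $y$ a $z$ with $B_w^Tz=y$ gives $y^TNy=z^T(B_wNB_w^T)z\ge0$, i.e.\ $N\succeq0$, so $W\in\mathcal{W}$ and $A\in\bar{\Sigma}_s$.

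I expect the range/cancellation argument to be the crux: showing that $Q_d\prec0$ silently enforces consistency of $B_wW=E$ and that full column rank lets one strip off the $B_w$ factors are precisely the two places where Assumptions \ref{Ass:disturbance} and (implicitly) the rank hypothesis on $B_w$ are indispensable. Everything else is the same bookkeeping as in the unknown-$B$ case, with $[A~B]\mathcal{Z}$ replaced by $AX$ and $\dot X$ replaced by $\mathcal{U}=\dot X-BU$.
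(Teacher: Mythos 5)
Your proof is correct and follows essentially the same route the paper takes: the paper establishes this lemma by the same ``replacement'' argument used for Lemma~1 (citing \cite[Theorem 1]{Berberich2020}), i.e., substituting $E=\mathcal{U}-AX=B_wW$ into the noise bound and exploiting the full column rank of $B_w$ for the converse. Your explicit handling of the solvability of $B_wW=E$ via $Q_d\prec0$ (range inclusion) and the left inverse simply spells out the details that the paper leaves implicit under ``similar to Lemma~1.''
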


{Lemma \ref{Lemma:system:data:K}
	presents a data-based representation of system \eqref{sys:LTI} while accounting for prior knowledge of the input matrix $B$. 
As discussed in \cite{berberich2020combining}, additional prior knowledge of the system may help shrink the set of systems consistent with the collected data, thereby contributing to less conservative data-based stability conditions compared to the ones derived using only data.
This can be understood as a data- and model-driven hybrid approach.
In fact, it can be shown that the set $\bar{\Sigma}_s$ in Lemma  \ref{Lemma:system:data:K} is tighter than $\Sigma_s$ in Lemma  \ref{Lemma:system:data} for system \eqref{sys:LTI}.
On the other hand, prior knowledge of the input matrix $B$ may not always be available in general.
These considerations thus motivates us to pursue data-driven control schemes based on $(i)$ pure data as well as $(ii)$ data along with prior knowledge of $B$, which are detailed in Section \ref{sec:mainresults}.}

\begin{Assumption}\label{Ass:matrix:k}
The matrix $\bar{\Theta}_s$ is invertible and has $n_w$ positive eigenvalues.
\end{Assumption}

In practice, 
\textcolor{blue}{
Assumptions \ref{Ass:matrix} and \ref{Ass:matrix:k} hold if that the available
measurements of the system states are sufficiently rich. To be precise, Assumptions \ref{Ass:matrix} and \ref{Ass:matrix:k} hold for the
common special case $S_d = 0$ if i) $\mathcal{Z}$ (Assumption 4) or $X$ (Assumption 5) has full row rank, ii)
$B_w$ is invertible, i.e., $n_w = n$, and iii) the disturbance $W$
generating the data satisfies a strict version of the quadratic
matrix inequality in \eqref{data:disturbance}, i.e., $ W Q_d W^T +R_d \succ 0$.  The proof has been given in \cite[Appendix B]{Berberich2020}.
}
{\color{blue}\begin{Remark}[\emph {Estimation of state derivatives}]
In practice, the time derivatives $\dot{X}$ are typically not available.
However, by means of finite differences and with dense recordings of the system states $X$, a procedure for estimating
the state derivatives has been discussed in \cite{Berberich2020}. Specifically, one can approximate  $\dot{x}(T_i)$ by  $\underline{\dot{x}}(T_i)=\frac{x(T_{i+1})-x(T_{i})}{T_{i+1}-T_{i}}$ with error bound $\|\underline{\dot{x}}(T_i)-\dot{x}(T_i)\|_2 \leq \frac{\bar{a}(T_{i+1}-T_{i})}{2}\Big[\bar{a}\|x(T_i)\|_2+(1+\frac{\bar{a}(T_{i+1}-T_{i})}{3}\bar{b})\|u(T_i)\|_2\Big]$, where $\bar{a}$ and $\bar{b}$ are known bounds for $\|A\|_2\leq \bar{a}$ and $\|B\|_2\leq \bar{b}$. Alternatively, several derivative estimation methods based on continuous-time system identification can be found in \cite{Garnier2003}. In this paper,
we focus on design and analysis of data-driven event-triggered control, where data of the derivative are available, but exploring further possibilities to estimate $\dot{X}$ is an interesting issue for future research.
\end{Remark}}

\begin{Remark}[\emph {Related work}]
Lemma \ref{Lemma:system:data} from \cite{Berberich2020} contributes a purely data-based
representation for sets of continuous-time systems using only some pre-collected input-state data.
Compared to other approaches for data-driven control of continuous-time systems, e.g. in \cite{persis2020}, Lemma \ref{Lemma:system:data} enjoys reduced computational complexity as well as enhanced robustness against additive noise.
Besides, thanks to its form of linear matrix inequalities, the data-based representation in \eqref{data:represent} can be naturally married with  model-based sampled-data control methods, such as the discrete-time approach, input-delay approach, input-output approach, and impulsive systems approach, to study data-driven control of sampled-data systems; see, e.g., \cite{Berberich2020,wildhagen2021datadriven,wildhagen2021improved}.
Overall, the work of \cite{Berberich2020} put forward a unifying
data-driven control framework for continuous-time sampled-data systems with time delays.
In this paper, we build on Lemmas \ref{Lemma:system:data} and \ref{Lemma:system:data:K} to investigate data-driven control of event-triggered sampled-data systems with delays by developing a data-driven looped-functional analysis.
\end{Remark}

\subsection{Dynamic event-triggering transmission scheme}
In the sampled-data control system in Fig. \ref{FIG:structure}, a dynamic event-triggering  module is employed to dictate the transmission instants $\{t_k\}_{k\in \mathbb{N}}$.
In contrast to classical time-triggered transmission schemes, \emph {only} ``informative'' measurements are transmitted by event-triggered schemes to save transmission resources.
Concretely, our proposed dynamic event-triggering scheme consists of two steps. The first step, similar to \emph {time-triggering scheme}, is to sample the system state at time instants $\{t_k+i h-d\}_{i,k \in \mathbb{N}}$, where the sampling interval $h$ satisfies $0< \underline{h} \leq h \leq \bar{h}$.
In the second step, that is \emph {event-triggering scheme},
the sampled signal $x(t_k+i h-d)$ is to be checked for the following condition
{\color{blue}
\begin{align}\label{sys:judgement}
&\eta(\tau_j-d)+\theta \rho(\tau_j-d)<0,
\end{align}
where 
$\rho$ is a discrete-time function $\rho: \{\tau_j-d\}_j\rightarrow \mathbb{R}$, defined as follows
\begin{equation}\label{trigger:func}
\begin{aligned}
\rho:=&~
\sigma_1 x^T(\tau_j-d)\Omega x(\tau_j-d)-e^T(\tau_j-d)\Omega e(\tau_j-d)\\
&+\sigma_2 x^T(t_k-d)\Omega x(t_k-d),
\end{aligned}
\end{equation}
where $\tau_j=t_k+jh$, $j=0,1,\cdots,m_k$, with $m_k=\frac{t_{k+1}-t_k}{h}-1$;
$\Omega\succeq0$ is some weight matrix; $\sigma_1$, $\sigma_2$, and $\theta\geq0 $ are parameters to be designed; $e(\tau_j-d):=x(t_k-d)-x(\tau_j-d)$ denotes the error between sampled signals $x(t_k-d)$ at the latest transmission
instant and $x(\tau_j-d)$ at the current sampling instant; and,
$\eta(\tau_j-d)$ is a dynamically evolving  parameter $\eta: \{\tau_j-d\}_j\rightarrow \mathbb{R}$, satisfying the following difference equation
\begin{equation}\label{sys:dynamic}
\eta(\tau_{j+1}-d)-\eta(\tau_{j}-d)=-\lambda\eta(\tau_j-d)+\rho(\tau_j-d)
\end{equation}}
for which $\eta(0)\geq0$ and $\lambda>0$ are given.

If the condition \eqref{sys:judgement} is met, the current sampled state $x(\tau_j-d)$ is transmitted to the controller, and a zero-order holder (ZOH) follows to maintain it within the
interval $[t_k,t_{k+1})$. Then, the triggering algorithm is updated and detects the next sampled datum.
\textcolor{blue}{In summary, the dynamic event triggering condition can be written as follows
\begin{align}\label{sys:trigger}
t_{k+1}=t_k+h  \min_{j\in \mathbb{N}}\Big\{j>0\Big|\eta(\tau_j-d)+\theta\rho(\tau_j-d)<0\Big\}.
\end{align}
In Fig. \ref{FIG:evolution:event}, a scheme illustrating the periodic sampling with the proposed triggering transmission scheme is provided. The system states are sampled periodically. Then, the delayed sampled data are transmitted to the controller and arrive at $\{t_k\}_k$, if the condition \eqref{sys:judgement} is satisfied, i.e., at time instants $t_1-d$, $t_2-d$, and $t_3-d$.}
\begin{figure}\color{blue}
	\centering
\includegraphics[width=\columnwidth]{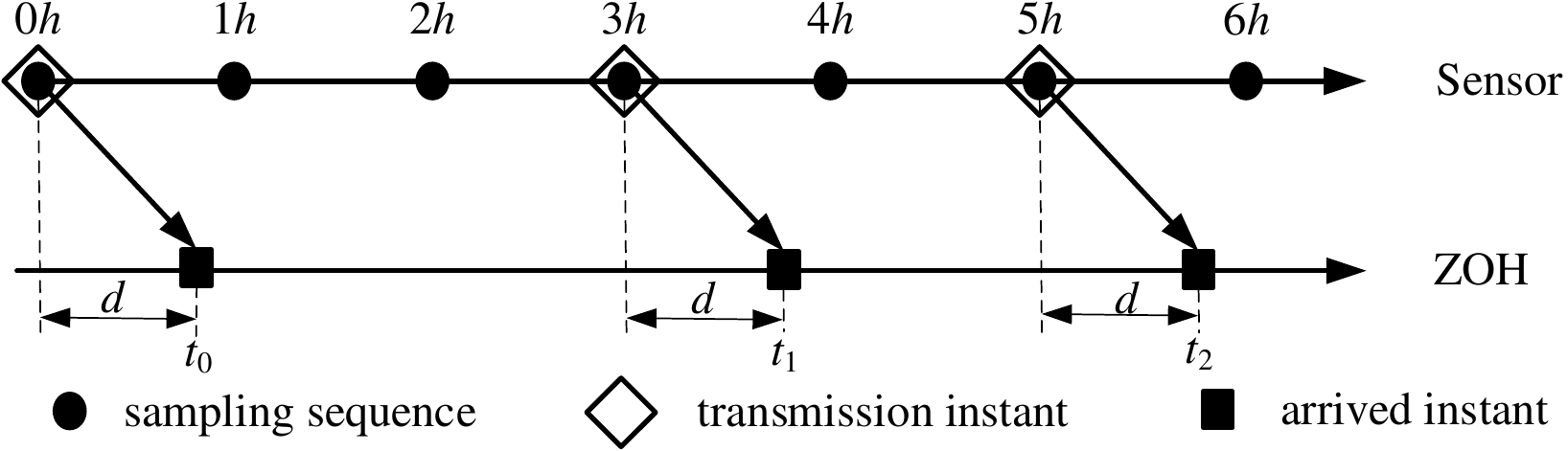}
	\caption{Evolution of sampling and transmission events.}
	\label{FIG:evolution:event}
\end{figure}

The following result states that the value of $\eta(\tau_j-d)$ remains nonnegative for system \eqref{sys:sampling} under the triggering condition \eqref{sys:trigger} with $t\geq0$ if the initial value $\eta(0)\geq0$.

{\color{blue}\begin{Lemma}[\emph {Non-negativity}]\label{lemma:nonneg.dynam}
Let $\eta(0)\ge0$, $\lambda>0$ be positive constants, $\Omega=\Omega^T\succeq 0$ be a positive semi-definite matrix,
and $\theta>0$ be a positive constant satisfying $1-\lambda-\frac{1}{\theta}\geq0$, or $\theta=0$.
Then, it holds that $\eta(\tau_j-d)\geq0$, for all $j\in \mathbb{N}$ under the triggering condition in \eqref{sys:trigger}.
\end{Lemma}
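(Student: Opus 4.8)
The plan is to rewrite the difference equation \eqref{sys:dynamic} as the explicit one-step recursion
\begin{equation*}
\eta(\tau_{j+1}-d)=(1-\lambda)\eta(\tau_j-d)+\rho(\tau_j-d),
\end{equation*}
and then establish $\eta(\tau_j-d)\ge0$ by induction on $j$. The base case is immediate from the hypothesis $\eta(0)\ge0$, recalling that the first sampling instant $\tau_0=t_0=d$ in \eqref{sys:instants} gives $\tau_0-d=0$. For the inductive step I would assume $\eta(\tau_j-d)\ge0$ and lower-bound the right-hand side of the recursion using the fact that, at every sampling instant $\tau_j$ with $1\le j\le m_k$ that precedes a transmission, the triggering condition \eqref{sys:judgement} does \emph{not} fire; this is exactly the content of the minimization in \eqref{sys:trigger}.

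Concretely, for the case $\theta>0$, non-triggering at $\tau_j$ means $\eta(\tau_j-d)+\theta\rho(\tau_j-d)\ge0$, i.e.\ $\rho(\tau_j-d)\ge-\tfrac{1}{\theta}\eta(\tau_j-d)$. Substituting this lower bound into the recursion yields
\begin{equation*}
\eta(\tau_{j+1}-d)\ge\Big(1-\lambda-\tfrac{1}{\theta}\Big)\eta(\tau_j-d)\ge0,
\end{equation*}
where the last inequality uses the standing assumption $1-\lambda-\tfrac1\theta\ge0$ together with the inductive hypothesis $\eta(\tau_j-d)\ge0$. Observe that this assumption forces $1-\lambda\ge\tfrac1\theta>0$, so the coefficient $1-\lambda$ is itself positive; this will be convenient for the boundary instants. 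This step is the crux of the argument: it is precisely the design inequality $1-\lambda-\tfrac1\theta\ge0$ that renders the one-step map nonnegativity-preserving, so that $\eta$ stays nonnegative even at the triggering instant (which fires because $\rho$ is sufficiently negative, not because $\eta$ itself has become negative).

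The delicate points, which I expect to be the main obstacle, are the transmission instants and the degenerate case $\theta=0$. At a transmission instant $\tau_0=t_k$ the error resets, $e(\tau_0-d)=x(t_k-d)-x(t_k-d)=0$, so from \eqref{trigger:func} one gets $\rho(\tau_0-d)=(\sigma_1+\sigma_2)x^T(t_k-d)\Omega x(t_k-d)$; moreover the test in \eqref{sys:trigger} is evaluated only for $j>0$, hence the clean bound $\rho\ge-\tfrac1\theta\eta$ is unavailable at $j=0$ and the step $\eta(\tau_0-d)\mapsto\eta(\tau_1-d)$ (relevant when $m_k=0$) must be handled separately, e.g.\ exploiting $\Omega\succeq0$ together with the sign of $\sigma_1+\sigma_2$, or interpreting the just-transmitted instant as trivially non-triggering. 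For $\theta=0$ the condition \eqref{sys:judgement} collapses to $\eta(\tau_j-d)<0$, decoupling the dynamic variable from $\rho$; I would treat this as a boundary case, verifying directly that the recursion and triggering rule together never drive $\eta$ below zero. I would carry out the generic induction first and then dispatch these two situations, since they do not affect the main mechanism but require careful bookkeeping of how $t_k$, the reset of $e$, and the index range $0\le j\le m_k$ interact.
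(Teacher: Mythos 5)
Your proposal is correct and takes essentially the same route as the paper's proof: non-firing of \eqref{sys:judgement} under the rule \eqref{sys:trigger} yields $\eta(\tau_j-d)+\theta\rho(\tau_j-d)\geq 0$, hence $\rho(\tau_j-d)\geq-\tfrac{1}{\theta}\eta(\tau_j-d)$, and the recursion \eqref{sys:dynamic} plus induction with $1-\lambda-\tfrac{1}{\theta}\geq 0$ (with $\theta=0$ treated separately) closes the argument. Your handling of the boundary situations is in fact more explicit than the paper's, which simply asserts the non-firing inequality for all $j\in\mathbb{N}$ without separating the transmission instants (where, as you note, the reset $e=0$ gives $\rho\geq 0$ by $\Omega\succeq 0$ and $\sigma_1,\sigma_2>0$) and dismisses $\theta=0$ as obvious.
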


\begin{proof}
According to the event-triggering condition in \eqref{sys:judgement} for $t\geq0$, the condition is not met, which ensures that 
\begin{align}\label{Lemma1:1}
\eta(\tau_j-d)+\theta \rho(\tau_j-d) \geq0, ~\forall j\in \mathbb{N}.
\end{align}
If $\theta=0$, it obviously holds that $\eta(\tau_j-d)\geq0$. Hence, let us focus on $\theta>0$. Inequality \eqref{Lemma1:1} suggests that
\begin{align}\label{Lemma1:2}
\rho(\tau_j-d) \geq -\frac{1}{\theta}\eta(\tau_j-d).
\end{align}
From \eqref{sys:dynamic}, 
it is obtained that
\begin{align}\label{Lemma1:3}
\eta(\tau_{j+1}-d)\geq (1-\lambda)\eta(\tau_j-d)-\frac{1}{\theta}\eta(\tau_j-d).
\end{align}
By induction along with the initial condition $\eta(0)\geq0$ and the assumption $1-\lambda-\frac{1}{\theta}\geq0$, it follows that $\eta(\tau_j-d)\geq0$, for all $j\in \mathbb{N}$,
completing the proof.
\end{proof}}

\begin{Remark}[\emph {Time delay}]
\textcolor{blue}{Time-delays \cite{Lee2020,Sun2010,Zeng2015a,Zeng2019} are frequently introduced into control loops when implementing practical control systems through communication networks (see the structure in Fig. \ref{FIG:structure});
e.g., constant communication delays in the sensor-to-controller channel have been considered in event-triggered systems \cite{ZHANG201555} and sampled-data systems \cite{Zeng2019}.
Time delays can endanger stability properties of the closed loop.
Thus, an important objective of stability analysis is to find an admissible delay region $[\underline{d},\bar{d}]$. 
In this paper, we assume that the upper bound $\bar{d}$ of the constant delay in the sensor-to-controller channel satisfies  $\bar{d}\in [0,h)$.}
Therefore, it is clear that
$t_k<t_{k+1}$, which implies that no
disordered transmission packets occur.
For simplicity, the computation delays in the controller and sensor are neglected.
\end{Remark}
\begin{Remark}[\emph {Comparison of different event-triggering  schemes}]\label{remark:trigger}
{\color{blue}The event-triggering scheme  \eqref{sys:trigger} can be seen as a discrete-time generalization of the continuous dynamic periodic one studied in \cite{Liu2018}. Different from \cite{Liu2018}, the dynamic variable $\eta(t)$ only changes at discrete sampling points $\tau_j$, which eliminates the computational burden caused by continuously computing $\eta(t)$ in \cite{Liu2018}.} 
Besides, our proposed scheme subsumes several existing ones as special cases. For instance,
if $d=0$, $\sigma_1=0$, and $\theta$ goes to $\infty$, the condition \eqref{sys:trigger} becomes the discrete event-triggering scheme in \cite{Yue2013}.
By setting $d=0$, $\sigma_1=0$, and $h=0$, our scheme boils down to the dynamic event-triggering
scheme in \cite{Girard2015}; and, it also degenerates to a periodic transmission scheme when $d=0$, $\sigma_1=0$, $\sigma_2=0$, and $\theta$ goes to $\infty$. Thus, our triggering scheme unifies existing event-triggering schemes, and is expected to save transmission resources.
\end{Remark}

\section{Main Results}\label{sec:mainresults}
In this section, we analyze the stability of system \eqref{sys:sampling} under the transmission scheme in \eqref{sys:trigger}, and derive an upper bound on the sampling interval $h$. A data-based stability condition is then reproduced from the model-based condition, and used for co-designing the triggering matrix $\Omega$ as well as the controller gain $K$ from data.

{\color{blue}\begin{Lemma}\label{lemma:inquality}
For any vectors $\vartheta\in\mathbb{R}^m$, matrices $R=R^T\in\mathbb{R}^{n\times n}\succ0$, $N\in\mathbb{R}^{m\times 2n}$, scalars $\alpha\leq\beta \in \mathbb{N}$, and
a differentiable function $x:[\alpha,\,\beta]\rightarrow \mathbb{R}^{n}$,
the following integral inequalities hold true
\begin{align*}
-\int_{\alpha}^{\beta} \dot x^T(s)R \dot x(s)ds
\leq &(\beta-\alpha)\vartheta^{T}N \mathcal{R}^{-1} N^{T}\vartheta\\
&+{\rm Sym}\left\{\vartheta^{T}N\Pi\right\}
\end{align*}
where 
\begin{align*}
\mathcal{R}&:={\rm diag}\left\{R, \,3R\right\},\\
\Pi&:=\left[x^T(\beta)\!-\!x^T(\alpha), x^T(\beta)\!+\!x^T(\alpha)\!- \!2\int_{\alpha}^{\beta} \frac{x^T(s)}{\beta-\alpha}ds\right]^T.
\end{align*}
\end{Lemma}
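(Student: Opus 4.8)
The plan is to split the claim into two logically independent pieces. First I would establish a sharp, $N$-free quadratic lower bound on the integral, namely the Wirtinger-type (second-order Bessel--Legendre) inequality
\begin{equation*}
\int_{\alpha}^{\beta}\dot x^{T}(s)R\dot x(s)\,ds\;\geq\;\frac{1}{\beta-\alpha}\,\Pi^{T}\mathcal{R}\,\Pi ,
\end{equation*}
and then inject the free matrix $N$ and the free vector $\vartheta$ by an elementary completion-of-squares step. The virtue of this decomposition is that all the genuine functional-analytic content lives in the first step, while the second step is purely algebraic and exploits only $R\succ0$ (hence $\mathcal{R}={\rm diag}\{R,3R\}\succ0$, so that $\mathcal{R}^{1/2}$ and $\mathcal{R}^{-1/2}$ are well defined).

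For the first step I would work on $L^{2}([\alpha,\beta];\mathbb{R}^{n})$ and project $\dot x$ onto the span of the two shifted Legendre polynomials $p_{0}(s)=1$ and $p_{1}(s)=\tfrac{2s-(\alpha+\beta)}{\beta-\alpha}$, which are orthogonal on $[\alpha,\beta]$. Writing the orthogonal remainder as $r(s)$ and using $\int_{\alpha}^{\beta}r^{T}(s)Rr(s)\,ds\geq0$ (valid since $R\succ0$), one is left precisely with the two projection terms. Integration by parts identifies the coefficients as $\langle\dot x,p_{0}\rangle=x(\beta)-x(\alpha)$ and $\langle\dot x,p_{1}\rangle=x(\beta)+x(\alpha)-\tfrac{2}{\beta-\alpha}\int_{\alpha}^{\beta}x(s)\,ds$, while $\|p_{0}\|^{2}=\beta-\alpha$ and $\|p_{1}\|^{2}=\tfrac{\beta-\alpha}{3}$, which reproduces exactly the $R$ and $3R$ blocks of $\mathcal{R}$ together with the stated $\Pi$. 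This is a standard inequality and could alternatively be invoked by citation; I include the projection argument so the constants $1$ and $3$ are seen to be correct.

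For the second step, set $\zeta:=\sqrt{\beta-\alpha}\,\mathcal{R}^{-1/2}N^{T}\vartheta+(\beta-\alpha)^{-1/2}\mathcal{R}^{1/2}\Pi$. Expanding the trivially nonnegative scalar $\zeta^{T}\zeta$ gives
\begin{equation*}
0\leq\zeta^{T}\zeta=(\beta-\alpha)\vartheta^{T}N\mathcal{R}^{-1}N^{T}\vartheta+{\rm Sym}\{\vartheta^{T}N\Pi\}+\frac{1}{\beta-\alpha}\Pi^{T}\mathcal{R}\Pi ,
\end{equation*}
where the cross term collapses to $2\vartheta^{T}N\Pi={\rm Sym}\{\vartheta^{T}N\Pi\}$ because $\mathcal{R}^{-1/2}\mathcal{R}^{1/2}=I$ and the product is a scalar. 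Combining this with the negated Wirtinger bound yields the chain
\begin{align*}
-\int_{\alpha}^{\beta}\dot x^{T}(s)R\dot x(s)\,ds
&\leq-\frac{1}{\beta-\alpha}\Pi^{T}\mathcal{R}\Pi\\
&\leq(\beta-\alpha)\vartheta^{T}N\mathcal{R}^{-1}N^{T}\vartheta+{\rm Sym}\{\vartheta^{T}N\Pi\},
\end{align*}
which is the assertion. I expect the only real obstacle to be the first step: the completion of squares is immediate, whereas verifying the Legendre projection (orthogonality, the integration-by-parts evaluation of $\langle\dot x,p_{1}\rangle$, and the norm $\|p_{1}\|^{2}=\tfrac{\beta-\alpha}{3}$ that produces the factor $3$) is where care is needed; a secondary, purely cosmetic point is checking that $\alpha=\beta$ is handled as a degenerate limit so that the $\tfrac{1}{\beta-\alpha}$ scaling is not problematic.
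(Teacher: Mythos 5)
Your proof is correct, but it takes a genuinely different route from the paper, which in fact offers no proof at all: the paper simply remarks that this lemma is a special case of the free-matrix-based integral inequality \cite[Lemma 1]{Zeng2015a} and omits the argument. Your version is self-contained and splits the claim cleanly: first the Wirtinger-type bound $\int_{\alpha}^{\beta}\dot x^{T}(s)R\dot x(s)\,ds\geq\frac{1}{\beta-\alpha}\Pi^{T}\mathcal{R}\Pi$ obtained by Bessel-type projection of $\dot x$ onto the shifted Legendre polynomials $p_{0},p_{1}$ (your coefficient computations via integration by parts and the norms $\|p_{0}\|^{2}=\beta-\alpha$, $\|p_{1}\|^{2}=\frac{\beta-\alpha}{3}$ are exactly what produce the $R$ and $3R$ blocks); second, the completion of squares $0\leq\zeta^{T}\zeta$ with $\zeta=\sqrt{\beta-\alpha}\,\mathcal{R}^{-1/2}N^{T}\vartheta+(\beta-\alpha)^{-1/2}\mathcal{R}^{1/2}\Pi$, whose cross term is the scalar $2\vartheta^{T}N\Pi={\rm Sym}\{\vartheta^{T}N\Pi\}$, giving $-\frac{1}{\beta-\alpha}\Pi^{T}\mathcal{R}\Pi\leq(\beta-\alpha)\vartheta^{T}N\mathcal{R}^{-1}N^{T}\vartheta+{\rm Sym}\{\vartheta^{T}N\Pi\}$, and the two bounds chain correctly. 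What the paper's citation buys is brevity and access to a more general slack-variable inequality; what your route buys is transparency: it exhibits the free-matrix form as a pure consequence of the Wirtinger inequality plus Young-type relaxation, and it makes visible that the optimal choice $N^{T}\vartheta=-\frac{1}{\beta-\alpha}\mathcal{R}\Pi$ recovers the Wirtinger bound exactly, so the free matrix $N$ adds no tightness but permits affine embedding into LMIs. The only caveats are the ones you already flag: the statement implicitly requires $\beta>\alpha$ (at $\alpha=\beta$ the definition of $\Pi$ itself degenerates, so this is a defect of the statement rather than of your proof), and $\dot x$ should be square-integrable, the claim being trivial otherwise since the left-hand side is $-\infty$.
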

Lemma \ref{lemma:inquality} can be cast as a special case of \cite[Lemma 1]{Zeng2015a}, whose proof is omitted here.
Lemma \ref{lemma:inquality} together with the looped-functional approach \cite[Theorem 1]{Seuret2012} are crucial ingredients of our theoretical analysis.}

\subsection{Model-based stability analysis}
We begin by developing a model-based stability analysis of the event-triggered system \eqref{sys:sampling} in this subsection, where matrices $A$ and $B$ are firstly assumed \emph{known}.
The proposed triggering strategy \eqref{sys:trigger} comprises a step of time-trigger with sampling interval $h$. \textcolor{blue}{A bound on the MSI $h$ can be computed through the following stability condition.}
{\color{blue}\begin{Theorem}[\emph {Model-based condition}]\label{Th1}
For given scalars $\bar{h}>\underline{h}>0$, $\bar{d}>\underline{d}>0$, $ \sigma_{1}>0$, $\sigma_{2}>0$, $\lambda>0$, and $\theta>0$ satisfying $1-\lambda-\frac{1}{\theta}\geq0$, or $\theta=0$, system \eqref{sys:sampling} is asymptotically stable under the triggering condition in \eqref{sys:trigger}, and $\eta(\tau_j-d)$ converges to the origin for any $\eta(0)\geq 0$, if there exist matrices $P\succ0$, $Z\succ0$, $T\succ0$, $R_1\succ0$, $R_2\succ0$, $\Omega\succ0$,
$S=S^T$, $N$, $M_1$, $M_2$, $F$, such that the following LMIs hold for all $h\in\{\underline{h}, \bar{h}\}$ and $d\in\{\underline{d}, \bar{d}\}$
\begin{align}{\label {Th1:LMI1}}
&\left[
  \begin{array}{ccc}
    \Xi_0+h\Xi_a+\Psi+\mathcal{O}  & d N \\
    \ast & d \mathcal{T}
  \end{array}
\right]\prec0
\end{align}
\begin{align}{\label {Th1:LMI2}}
&\left[
 \begin{array}{ccc}
    \Xi_0+h\Xi_b+\Psi+\mathcal{O}  & dN & h\mathcal{M}\\
    \ast & d\mathcal{T} &0\\
    \ast & \ast & h\mathcal{R}
  \end{array}
  \right]\prec0
\end{align}
where
\begin{align*}
\Xi_0:=&~{\rm Sym}\Big\{{\Pi}_1^{T}P{\Pi}_3+d{\Pi}_2^{T}P{\Pi}_3+N{\Pi}_6+M_1\Pi_{8}+M_2\Pi_{9}\Big\}\\
&+L_{3}^T\mathcal{T} L_{3}+{\Pi}_{4}^{T}Z{\Pi}_4-{\Pi}_{5}^{T}Z{\Pi}_5,\cr
\Xi_a:=&~\Pi_7^T S \Pi_7+L_{3}^TR_1L_{3}+L_{4}^TR_2L_{4},~
\Xi_b:=-\Pi_7^T S \Pi_7,\\
\Psi:=&~{\rm Sym}\big\{F(AL_1+BL_{10}-L_{3})\big\},\\
\mathcal{O}:=&~\sigma_1 L_7^T \Omega L_7
+\sigma_2 L_{10}^T\Omega L_{10}- (L_7-L_{10})^T \Omega(L_7-L_{10}),\\
\mathcal{R}:=&~{\rm diag}\left\{-R_1, \,-3R_1, \,-R_2, \,-3R_2\right\},\cr
\mathcal{T}:=&~ {\rm diag}\left\{-T, \,-3T\right\},~~~~
\mathcal{M}:=\left[M_1 ~M_2\right],\\
%
{\Pi}_1:=&~[L_1^T,\, L_{2}^T,\,L_{0}^T]^T,~~~~~~\,
{\Pi}_2:=[L_0^T,\, L_{0}^T,\,L_{4}^T]^T,\\
{\Pi}_3:=&~[l_{3}^T,\, L_{4},\,L_{1}^T-L_{2}^T]^T,~
{\Pi}_4:=[L_1^T,\, L_{3}^T],\\
{\Pi}_5:=&~[L_{2}^T,\, L_{4}^T],~~
{\Pi}_6:=[L_1^T-L_{2}^T,\, L_1^T+L_{2}^T-2L_{5}^T]^T,\\
\Pi_7:=&~[L_6^T,\, L_7^T],~~
\Pi_8:=[L_1^T-L_{6}^T,\, L_1^T+L_{6}^T-2L_{8}^T]^T,\\
\Pi_9:=&~[L_2^T-L_{7}^T,\, L_2^T+L_{7}^T-2L_{9}^T]^T,\\
L_i:=&~\left[0_{n\times (i-1)n}, \,I_n, \,0_{n\times (10-i)n} \right], \;i=1, \ldots,10, \\
L_0:=&~0_{n\times10n}.
\end{align*}
\end{Theorem}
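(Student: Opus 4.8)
The plan is to merge a Lyapunov--Krasovskii functional with the looped-functional framework of \cite[Theorem 1]{Seuret2012}, and to fold the discrete variable $\eta$ into the analysis so that asymptotic stability of the state and convergence of $\eta$ emerge together. On each sampling interval $[\tau_j,\tau_{j+1})$, with $\nu:=t-\tau_j\in[0,h)$, I would take a positive-definite functional $V(t)$ assembled from the supplied matrices: a quadratic form $\zeta^{T}(t)P\zeta(t)$ in an augmented vector $\zeta(t)$ of $x(t)$, $x(t-d)$ and delay-weighted integrals --- chosen so that its derivative yields the $\mathrm{Sym}\{\Pi_1^{T}P\Pi_3+d\Pi_2^{T}P\Pi_3\}$ contribution of $\Xi_0$ --- together with the $Z$-, $T$-, $R_1$-, $R_2$-weighted integrals of $\dot x$. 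To this I add a looped functional $V_0(t)$ carrying the indefinite term $S$ and $(h-\nu)$-weighted integrals, designed to be continuous with the looping property $V_0|_{\nu=0}=V_0|_{\nu=h}=0$. Positivity of $V$ follows from $P,Z,T,R_1,R_2\succ0$, so $V+V_0$ is admissible in the sense of \cite{Seuret2012}. Throughout, let $\xi$ stack the $n$-blocks selected by $L_1,\dots,L_{10}$, so that e.g. $L_1\xi=x(t)$, $L_3\xi=\dot x(t)$, $L_4\xi=\dot x(t-d)$, $L_6\xi=x(\tau_j)$, $L_7\xi=x(\tau_j-d)$ and $L_{10}\xi=x(t_k-d)$; the $\Pi_i$ are exactly the aggregated signals that appear once $\tfrac{d}{dt}(V+V_0)$ is rewritten in this basis.

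Next I would differentiate $V+V_0$ along \eqref{sys:sampling} and express $\tfrac{d}{dt}(V+V_0)\le\xi^{T}\Phi(\nu)\xi$, where $\Phi(\nu)$ is \emph{affine} in $\nu$ because of the $(h-\nu)$-weighting. The negative terms $-\int_{t-d}^{t}\dot x^{T}T\dot x$, $-\int_{\tau_j}^{t}\dot x^{T}R_1\dot x$ and $-\int_{\tau_j-d}^{t-d}\dot x^{T}R_2\dot x$ are bounded by Lemma \ref{lemma:inquality}, which supplies the cross terms $\mathrm{Sym}\{\xi^{T}N\Pi_6\}$, $\mathrm{Sym}\{\xi^{T}M_1\Pi_8\}$, $\mathrm{Sym}\{\xi^{T}M_2\Pi_9\}$ plus reciprocal quadratics built from $\mathrm{diag}\{T,3T\}$ and $\mathrm{diag}\{R_1,3R_1,R_2,3R_2\}$, scaled by $d$ and $h$; these inverse terms are eliminated by Schur complements, producing precisely the off-diagonal blocks $dN$, $h\mathcal{M}$ and diagonal blocks $d\mathcal{T}$, $h\mathcal{R}$ of \eqref{Th1:LMI1}--\eqref{Th1:LMI2}. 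The closed-loop relation $\dot x(t)=Ax(t)+BKx(t_k-d)$ is injected through the free-weighting term $\Psi=\mathrm{Sym}\{F(\cdot)\}$, which vanishes identically along trajectories. Since $\Phi(\nu)$ is affine in $\nu\in[0,h]$, negativity at the two endpoints suffices: at $\nu=0$ the sampling-interval integrals vanish, leaving the quadratics $hL_3^{T}R_1L_3+hL_4^{T}R_2L_4$ and $+\Pi_7^{T}S\Pi_7$ (giving \eqref{Th1:LMI1} with $\Xi_a$), while at $\nu=h$ the full integral bounds activate (the extra $h\mathcal{M}/h\mathcal{R}$ block) and the $S$-term flips to $\Xi_b=-\Pi_7^{T}S\Pi_7$ (giving \eqref{Th1:LMI2}). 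As each block matrix is affine in both $h$ and $d$, convexity reduces the requirement to the four vertices $h\in\{\underline h,\bar h\}$, $d\in\{\underline d,\bar d\}$, exactly as stated.

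It remains to couple the quadratic decrease to the triggering rule. The key identity is $\xi^{T}\mathcal{O}\xi=\rho(\tau_j-d)$: indeed $\sigma_1L_7^{T}\Omega L_7+\sigma_2L_{10}^{T}\Omega L_{10}-(L_7-L_{10})^{T}\Omega(L_7-L_{10})$ evaluates to $\sigma_1x^{T}(\tau_j-d)\Omega x(\tau_j-d)+\sigma_2x^{T}(t_k-d)\Omega x(t_k-d)-e^{T}\Omega e$, matching \eqref{trigger:func}. Hence, after the Schur reductions, \eqref{Th1:LMI1}--\eqref{Th1:LMI2} give $\xi^{T}(\Phi(\nu)+\mathcal{O})\xi\le-\epsilon\|\xi\|^{2}$, that is $\tfrac{d}{dt}(V+V_0)\le-\epsilon\|\xi\|^{2}-\rho(\tau_j-d)$ pointwise on $[\tau_j,\tau_{j+1})$. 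Because $\rho(\tau_j-d)$ is constant on this interval, integrating over its length $h$ and using $V_0|_{\nu=0}=V_0|_{\nu=h}=0$ gives $V(\tau_{j+1})-V(\tau_j)\le-\epsilon\!\int_{\tau_j}^{\tau_{j+1}}\!\|\xi\|^{2}ds-h\rho(\tau_j-d)$. Finally, defining the composite functional $\mathcal{V}_j:=V(\tau_j)+h\,\eta(\tau_j-d)$ and substituting the $\eta$-recursion \eqref{sys:dynamic} yields $\mathcal{V}_{j+1}-\mathcal{V}_j\le-\epsilon\!\int_{\tau_j}^{\tau_{j+1}}\!\|\xi\|^{2}ds-h\lambda\,\eta(\tau_j-d)$, in which the $\pm h\rho$ contributions cancel exactly. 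Since $\eta(\tau_j-d)\ge0$ by Lemma \ref{lemma:nonneg.dynam} and $V\ge0$, the sequence $\mathcal{V}_j$ is nonincreasing and bounded below, so telescoping forces $\eta(\tau_j-d)\to0$ and $\int\|\xi\|^{2}\to0$, establishing asymptotic stability of \eqref{sys:sampling} and convergence of $\eta$.

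The main obstacle is twofold. First, constructing $V_0$ --- in particular the placement of the indefinite $S$-term and the $(h-\nu)$-weighted $R_1,R_2$ integrals --- so that its derivative reproduces precisely the affine split $\Xi_a/\Xi_b$ while vanishing at both endpoints of the sampling interval. Second, the clean cancellation of the $h\rho$ terms, which hinges on selecting the weight $h$ in $\mathcal{V}_j=V+h\eta$ and on the sign condition $1-\lambda-\tfrac1\theta\ge0$ of Lemma \ref{lemma:nonneg.dynam} that keeps $\eta$ nonnegative. The remaining work --- expanding $\tfrac{d}{dt}(V+V_0)$ in the $L_i$ basis and carrying out the Schur complements --- is routine but lengthy bookkeeping.
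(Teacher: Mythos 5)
Your proposal is correct and follows essentially the same route as the paper's proof: the same split into a positive-definite Lyapunov part plus a looped functional with the $S$-term and $(\tau_{j+1}-t)$-weighted $R_1,R_2$ integrals, the same use of Lemma \ref{lemma:inquality} with free matrices $N,M_1,M_2$, the free-weighting identity $\Psi$, Schur complements, endpoint/vertex arguments in $\nu$, $h$, $d$, and the composite quantity $V+h\eta$ whose monotone decrease (via Lemma \ref{lemma:nonneg.dynam}) yields both conclusions. Your only deviation is cosmetic: the paper bounds $\eta(\tau_{j+1}-d)-\eta(\tau_j-d)\leq\rho=\xi^{T}\mathcal{O}\xi$ and adds this to $\dot V$, whereas you keep the exact recursion and cancel the $\pm h\rho$ terms, retaining the extra $-h\lambda\eta(\tau_j-d)$ decay, which if anything makes the final telescoping argument for $\eta\to0$ slightly more explicit.
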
}

\begin{proof}
Considering the interval $[\tau_j, \tau_{j+1})$ that satisfies $[t_k,\,t_{k+1})=\bigcup \limits_{j=0}^{j=m} [\tau_j, \tau_{j+1})$,
we choose the following functional for system \eqref{sys:sampling}
\begin{align}{\label {Th1:Vt}}
{\color{blue}V(x,t)=V_a(t)+V_l(t),~t\in [\tau_j, \tau_{j+1})}
\end{align}
where
\begin{align}\label{Th1:Va}
V_a(t)=&~\big[x^T(t)~ x^T(t-d) ~\int_{t-d}^tx^T(s)ds\big]P[\cdot]^T \notag\\
&+\int_{t-d}^t \phi^T(s)Z\phi(s)ds+\int_{t-d}^t \int_s^{t}\dot x^T(\upsilon)T {\color{blue}\dot x(\upsilon)}d\upsilon\,ds
\end{align}
{\color{blue} with  $\phi(s):=[x^T(s), \,\dot x^T(s)]^T$, $P\succ0$, $Z\succ0$, and $T\succ0$; and, a looped-functional $V_l(t)$ is constructed as follows
\begin{equation}\label{Th1:W}
\begin{aligned}
V_l(t)=&~(\tau_{j+1}-t)(t-\tau_{j})
\big[x^T(\tau_{j})~x^T(\tau_{j}-d)\big]S[\cdot]^T\\
&+(\tau_{j+1}-t)\int_{\tau_{j}}^{t} \dot x^T(s)R_1\dot x(s)ds\\
&+(\tau_{j+1}-t)\int_{\tau_{j}-d}^{t-d} \dot x^T(s)R_2\dot x(s)ds\\
\end{aligned}
\end{equation}
where $S=S^T$, $R_1\succ0$, and $R_2\succ0$ are to be designed.
The functional $V_a(t)$ is positive definite, while $V_l(t)$ satisfies $ \lim \limits_{t\rightarrow \tau_{j}}V_l(t)=\lim \limits_{t\rightarrow \tau_{j+1}}V_l(t)$. Following the loop-functional approach \cite{Seuret2012}, we have the next steps.}

Taking the derivative of $V(t)$ along the trajectory of system \eqref{sys:sampling} yields
\begin{align}{\label {Th1:Vd}}
\dot V(t)=\frac{d}{dt} \big[V_a(t)+ V_l(t)\big],
\end{align}
where $\dot \eta(t)$ is given in \eqref{sys:dynamic}, and
{\color{blue}\begin{align*}
\dot V_a(t)=\xi^T(t)\Big[&2({\Pi}_1^{T}P{\Pi}_3+d{\Pi}_2^{T}P{\Pi}_3)+L_{3}^TT L_{3}+ {\Pi}_{4}^{T}Z{\Pi}_4\\
&-{\Pi}_{5}^{T}Z{\Pi}_5\Big] \xi(t)-\int_{t-d}^t\dot x^T(s)T\dot x(s)ds,\\
\dot V_{l}(t)=\xi^T(t)\Big[&(\tau_{j+1}-t)\big(\Pi_7^T S \Pi_7+L_{3}^TR_1L_{3}+L_{4}^TR_2L_{4}\big)\\
&-(t-\tau_{j})\Pi_7^T S \Pi_7\Big]\xi(t)-\!\int_{\tau_{j}}^{t} \dot x^T(s)R_1\dot x(s)ds\\
&-\int_{\tau_{j}-d}^{t-d} \dot x^T(s)R_2 \dot x(s)ds.
\end{align*}}
where the notation $\xi(t)$ is given as
\begin{align*}
&\xi(t):=\bigg[x^T(t), x^T(t-d), \dot x^T(t), \dot x^T(t-d), \int_{t-d}^{t}\frac{x(s)}{d}ds,\\ &x^T(\tau_{j}),\!
x^T(\tau_{j}\!-\!d), \! \int_{t}^{\tau_{j}}\!\frac{x^T(s)}{t\!-\!\tau_{j}}ds,\!\int_{t\!-\!d}^{\tau_{j}\!-\!d}\!\frac{x^T(s)}{t-\tau_{j}}ds, \! x^T(t_k\!-\!d)\!\bigg]^T
\end{align*}

Note that there are some integral terms present in $\dot V(t)$.
According to Lemma \ref{lemma:inquality}, those terms satisfy that
\begin{align}
&-\int_{t-d}^t\dot x^T(s)T\dot x(s)ds\leq
\xi^T(t)\Big(d N {\mathcal{T}}^{-1} N^T + 2N{\Pi}_6 \Big)\xi(t)\\
&-\int_{\tau_{j}}^{t} \dot x^T(s)R_1 \dot x(s)ds-\int_{\tau_{j}-d}^{t-d} \dot x^T(s)R_1 \dot x(s)ds \notag \\
&~\leq \xi^T(t)\Big((t-\tau_{j}) \mathcal{M} \mathcal{R}^{-1} \mathcal{M}^T + 2M_1\Pi_{8}+2M_2\Pi_{9} \Big)\xi(t)
\end{align}
where $N$, $M_1$ and $M_2$ are suitable matrices.

According to \eqref{sys:sampling}, it is clear that for $t\in [\tau_j, \tau_{j+1})$
\begin{align}{\label {Th1:zero}}
0&=2\xi^T(t)F \big[A x(t)+BKx(t_k-d)-\dot x(t) \big]\notag\\
&=2\xi^T(t)F\big(Al_1+BKL_{10}-L_{3}\big)\xi(t)
\end{align}
where $F$ is a matrix of dimension $10n\times n$.

{\color{blue}
Summing up \eqref{Th1:Vd}---\eqref{Th1:zero} gives rise to
\begin{equation}{\label {Th1:sum}}
\begin{aligned}
\dot V(t)\leq \xi^{T}(t)\Big[&(\tau_{j+1}-t)\Xi_{a}+(t-\tau_j)(\Xi_{b}+\mathcal{M} \mathcal{R}^{-1} \mathcal{M}^T)\\
&+\Xi_{0}+\Psi+d N {\mathcal{T}}^{-1} N^T\Big]\xi(t).
\end{aligned}
\end{equation}

In light of the triggering condition \eqref{sys:trigger} for $t\in[t_k,\,t_{k+1})$, 
Lemma \ref{lemma:nonneg.dynam} asserts that $\eta(\tau_j-d)\geq0$ for $\eta(0)\ge0$, $\lambda>0$, and $\theta>0$ satisfying $1-\lambda-\frac{1}{\theta}\geq0$, or $\theta=0$.
 Besides, according to the equation \eqref{sys:dynamic}, it holds that
\begin{equation*}
 \begin{aligned}
\eta(\tau_{j+1}-d)-\eta(\tau_j-d)=-\lambda\eta(\tau_j-d)+\rho(\tau_j-d).
\end{aligned}
\end{equation*}
From \eqref{trigger:func}, $\rho(\tau_j-d)$ is written to be $\rho(\tau_j-d)=\xi^{T}(t) \mathcal{O}\xi(t)$. Thus, it easy to be derived that
 \begin{align}\label{Th1:triggering}
\eta(\tau_{j+1}-d)-\eta(\tau_j-d)\leq \xi^{T}(t)\mathcal{O}\xi(t).
\end{align}

Combining \eqref{Th1:sum} with \eqref{Th1:triggering}, we have that
\begin{equation}\label{Th1:Vd:sum}
 \begin{aligned}
\dot V(t)+\eta(&\tau_{j+1}-d)-\eta(\tau_j-d)\leq\\
& \xi^{T}(t)\left[\frac{\tau_{j+1}-t}{h}\Upsilon_1(h)
+\frac{t-\tau_j}{h}\Upsilon_2(h) \right]\xi(t)
\end{aligned}
\end{equation}
where
\begin{align*}
\Upsilon_1(h)&=\Xi_{0}+\Psi+\mathcal{O}+h\Xi_{a}+d N \mathcal{T}^{-1} N^T\\
\Upsilon_2(h)&=\Xi_{0}+\Psi+\mathcal{O}+h\Xi_{b} +h \mathcal{M} \mathcal{R}^{-1} \mathcal{M}^T+d N \mathcal{T}^{-1} N^T.
\end{align*}

Using the Schur Complement Lemma, it can be further deduced that $\Upsilon_1(h)\prec0$ and $\Upsilon_2(h)\prec0$ are equivalent
to \eqref{Th1:LMI1} and \eqref{Th1:LMI2}, which are affine in $h$ and $d$. Thus, \eqref{Th1:LMI1} and \eqref{Th1:LMI2} at the vertices of $(h,d)\in[\underline{h},\bar{h}]\times[\underline{d},\bar{d}]$ ensure $\dot V(t)+ \eta(\tau_{j+1}-d)-\eta(\tau_j-d)<0$ for all $(h,d)\in[\underline{h},\bar{h}]\times[\underline{d},\bar{d}]$.
It follows from \cite[Theorem 1]{SEURET2012177} that
\begin{equation}{\label {Th1:vj}}
\begin{aligned}
\int_{\tau_j}^{\tau_{j+1}} &[\dot V(s)+\eta(\tau_{j+1}-d)-\eta(\tau_j-d)]ds=\\
&V_a(\tau_{j+1})-V_a(\tau_{j})+h[\eta(\tau_{j+1}-d)-\eta(\tau_j-d)]<0
\end{aligned}
\end{equation}
which implies {\label {Th1:vj:equal}}
\begin{equation}
V_a(\tau_{j+1})+h\eta(\tau_{j+1}-d)<V_a(\tau_{j})+h\eta(\tau_j-d), \forall j\in \mathbb{N}.
\end{equation}
Finally, using $V_a(\tau_j)>0$ and $\eta(\tau_j-d)>0$, we can conclude that system \eqref{sys:sampling} and $\eta(\tau_j-d)$ converge to the origin under our
transmission scheme.
This ends the proof.}
\end{proof}

\textcolor{blue}{Theorem \ref{Th1} is derived based on  a novel looped-functional, using which larger bounds on  $h$ can be obtained, compared to the stability condition using only a common Lyapunov functional. Note  that Theorem \ref{Th1} guarantees stability for arbitrary values  of $h\in[\underline{h},\bar{h}]$ and $d\in[\underline{d},\bar{d}]$.
This is due to the fact that \eqref{Th1:LMI1} and \eqref{Th1:LMI2} in Theorem \ref{Th1} are affine in $h$ and $d$, and we verify the conditions  simultaneously for $h=\underline{h}$, $h=\bar{h}$, $d=\underline{d}$, and $d=\bar{d}$.}

\textcolor{blue}{In networked control systems, the values of the constant sampling interval $h$ and time-delay $d$ may be changed in a set  $[\underline{h},\bar{h}]\times[\underline{d},\bar{d}]$ due to undesired network effects. Thus,
admissible intervals of the time-delay and the sampling interval are required in co-designing the controller and the triggering scheme that guarantee the desired system performance. Theorem \ref{Th1} provides a systematic approach for computing such regions, based directly on measured data. 
It is also an interesting topic to increase the allowable ranges of $h$ and $d$, which is related to the conservatism of the stability conditions. This issue is discussed in the following remark.}

{\color{blue}\begin{Remark}[\emph {Discussion of looped-functional}]\label{R:loop}
The looped-functional \cite{Seuret2012} approach has been widely used for stability analysis of sampled-data control systems. Compared to the continuous-time Lyapunov stability theorem, the looped-functional approach provides less conservative stability conditions. Recently, various  types of looped-functionals have been proposed, such as two-sided looped-functional \cite{Zeng2017}, integral-based looped-functional \cite{Lee20177}, extended looped-functional \cite{Park2020}, and general looped-functional \cite{Wang2021}.
In the proof of Theorem \ref{Th1}, a simple type of one-sided delay-dependent looped-functional $V_l(t)$ [cf. \eqref{Th1:W}] is constructed.
It is easy to prove that $ \lim \limits_{t\rightarrow \tau_{j}}V_l(t)=\lim \limits_{t\rightarrow \tau_{j+1}}V_l(t)$, which asserts that $V_l(t)$ is a looped-functional for $t\in[\tau_j, \tau_{j+1})$. By Theorem \ref{Th1},  upper bounds on $h$ for varying transmission delays $d$ can be readily computed. The practical applicability of Theorem \ref{Th1} is illustrated with numerical examples in Section \ref{sec:example}.
It should be mentioned that larger allowable bounds on $h$ can be obtained by using more complicated looped-functional methods.
Besides, we only focus on closed-loop stability in Theorem \ref{Th1}, but it is straightforward to derive other  performance guarantees, including e.g., on the closed-loop $\mathcal{L}_2$-gain, using similar arguments as in \cite{Wang2021mixed}.
\end{Remark}}

\subsection{Data-based stability analysis}
We now derive a data-based stability certificate for the event-triggered control system \eqref{sys:sampling} with \emph{unknown} system matrices $A$ and $B$ (cf. Assumption \ref{Ass:matrix:AB}).
The main idea is to rebuild a system expression using the data $\{\dot x(T_i),\,x(T_i),\,u(T_i)\}^{\rho}_{i=1}$ to replace the matrix $(A~B)$-based representation in \eqref{sys:LTI}.
According to this,
a data-based system representation given in Lemma \ref{Lemma:system:data}, combined with the  model-based stability condition in Theorem \ref{Th1}, is employed to obtain a data-based stability condition.
\begin{Theorem}[\emph {Data-based condition with unknown $A$ and $B$}]\label{Th2}
For given scalars $\bar{h}>\underline{h}>0$, $\bar{d}>\underline{d}>0$, $\theta\geq0$, $\sigma_{1}>0$, $\sigma_{2}>0$, {\color{blue}$\lambda>0$, and $\theta>0$ satisfying $1-\lambda-\frac{1}{\theta}\geq0$, or $\theta=0$, system \eqref{sys:sampling} is asymptotically stable under the triggering condition in \eqref{sys:trigger} for any $[A ~B]\in \Sigma_s$, and $\eta(\tau_j-d)$ converges to the origin for any $\eta(0)\geq 0$}, if there exist a scalar $\varepsilon>0$, and matrices $P\succ0$, $Z\succ0$, $T\succ0$, {\color{blue}$R_1\succ0$, $R_2\succ0$, $\Omega\succ0$,
$S=S^T$, $N$, $M_1$, $M_2$,} $F$, such that the following LMIs hold for all $h\in\{\underline{h}, \bar{h}\}$ and $d\in\{\underline{d}, \bar{d}\}$
{\color{blue}
\begin{align}{\label {Th2:LMI1}}
&\left[
  \begin{array}{cccc}
    \mathcal{G}_3& \mathcal{G}_2+F^T& 0\\
    \ast & \mathcal{G}_1+\Xi_{0}+h\Xi_a+\tilde{\Psi}+\mathcal{O}  & dN \\
    \ast & \ast & d\mathcal{T} \\
  \end{array}
\right]\prec0
\end{align}
\begin{align}{\label {Th2:LMI2}}
&\left[\!\!
 \begin{array}{cccc}
     \mathcal{G}_3& \mathcal{G}_2+F^T& 0& 0\\
    \ast & \mathcal{G}_1+\Xi_0+h\Xi_b+\tilde{\Psi} +\mathcal{O}  & dN & h\mathcal{M}\\
    \ast & \ast & d \mathcal{T} &0\\
    \ast & \ast & \ast & h \mathcal{R}
  \end{array}
  \!\!\right]\prec0
\end{align}}
where
\begin{align*}
\tilde{\Psi}&:={\rm Sym}\big\{-FL_{3}\big\}\\
\mathcal{G}_1&:=\varepsilon\mathcal{Y}_1^T\tilde{\Theta}\mathcal{Y}_1,~
\mathcal{G}_2:=\varepsilon\mathcal{Y}_2^T\tilde{\Theta}\mathcal{Y}_1,~
\mathcal{G}_3:=\varepsilon\mathcal{Y}_2^T\tilde{\Theta}\mathcal{Y}_2\\
\tilde{\Theta}&:=\left[\begin{array}{cc}-\tilde{R}_c & \tilde{S}_c^T\\\ast &  -\tilde{Q}_c \\\end{array}\right],~
\left[\begin{array}{cc}\tilde{Q}_c & \tilde{S}_c\\\ast & \tilde{R}_c \\\end{array}\right]:=
\left[\begin{array}{cc}Q_c & S_c\\\ast & R_c \\\end{array}\right]^{-1}\\
\mathcal{Y}_1&:=[0~L_1^T~ (KL_{10})^T]^T,~
\mathcal{Y}_2:=[I~0~ 0 ]^T.
\end{align*}
\end{Theorem}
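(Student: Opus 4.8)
The plan is to certify stability not for a single, known pair of matrices but uniformly over the entire data-consistent set $\Sigma_s$, and then to collapse the resulting semi-infinite condition into a finite LMI by a matrix S-procedure. First I would isolate the dependence on the unknown matrices. In Theorem~\ref{Th1} the \emph{only} term involving $A$ and $B$ is $\Psi={\rm Sym}\{F(AL_1+BKL_{10}-L_3)\}$. Writing $AL_1+BKL_{10}=[A~B]\mathcal{L}$ with $\mathcal{L}:={\rm col}\{L_1,\,KL_{10}\}$, I split $\Psi=\tilde{\Psi}+{\rm Sym}\{F[A~B]\mathcal{L}\}$, where $\tilde{\Psi}={\rm Sym}\{-FL_3\}$ is known. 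Since Lemma~\ref{Lemma:system:data} guarantees that the unknown true pair satisfies $[A~B]\in\Sigma_s$, asymptotic stability together with convergence of $\eta(\tau_j-d)$ follows from the looped-functional argument of Theorem~\ref{Th1}, \emph{provided} the two model-based LMIs \eqref{Th1:LMI1}--\eqref{Th1:LMI2} hold for \emph{every} $[A~B]\in\Sigma_s$. Everything downstream of the LMIs (positivity of $V_a$, the looped-functional identity, and the vertex argument in $h,d$) then carries over verbatim.

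Second, I would recast each model-based LMI as a robust matrix inequality $\mathcal{F}_0+{\rm Sym}\{\mathcal{B}[A~B]\mathcal{C}\}\prec0$ required to hold for all $[A~B]$ obeying the quadratic matrix inequality \eqref{data:represent}, where $\mathcal{F}_0$ collects the $[A~B]$-independent blocks (namely $\Xi_0+h\Xi_{a}+\tilde{\Psi}+\mathcal{O}$ together with the $dN$, $d\mathcal{T}$, and, for \eqref{Th2:LMI2}, the $h\mathcal{M}$, $h\mathcal{R}$ blocks), and $\mathcal{B}$, $\mathcal{C}$ are $F$ and $\mathcal{L}$ padded by zeros to the full block size. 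To this semi-infinite inequality I would apply the matrix S-procedure used in \cite{Berberich2020,Waarde2020}. Assumption~\ref{Ass:matrix} (that $\Theta_s$ is invertible with exactly $n_w$ positive eigenvalues) is precisely the inertia/regularity requirement that renders the S-procedure lossless, so the robust inequality becomes equivalent to the existence of a multiplier $\varepsilon>0$ making a single enlarged LMI negative definite.

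Third, a Schur-complement/dualization step turns the multiplier term, which naturally involves $\Theta_s$, into one involving its inverse $\tilde{\Theta}$ built from $\Theta_s^{-1}$ as in the statement, and introduces an $n$-dimensional border indexed by $\mathcal{Y}_2=[I~0~0]^T$, with $\mathcal{Y}_1=[0~L_1^T~(KL_{10})^T]^T$ encoding $\mathcal{L}$. The blocks $\mathcal{G}_3=\varepsilon\mathcal{Y}_2^T\tilde{\Theta}\mathcal{Y}_2$, $\mathcal{G}_2=\varepsilon\mathcal{Y}_2^T\tilde{\Theta}\mathcal{Y}_1$, $\mathcal{G}_1=\varepsilon\mathcal{Y}_1^T\tilde{\Theta}\mathcal{Y}_1$, together with the $F^T$ in the off-diagonal position inherited from $\mathcal{B}$, then reproduce exactly \eqref{Th2:LMI1} and \eqref{Th2:LMI2}. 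Since these remain affine in $h$ and $d$, it again suffices to impose them at the four vertices of $[\underline{h},\bar{h}]\times[\underline{d},\bar{d}]$, as in Theorem~\ref{Th1}.

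The main obstacle is the lossless application of the matrix S-procedure and the attendant block bookkeeping: one must confirm that Assumption~\ref{Ass:matrix} supplies exactly the inertia condition the S-lemma requires, track which zero-padding of $F$ and $\mathcal{L}$ lands in which block, and verify that taking the Schur complement of the enlarged LMI with respect to the new $n$-dimensional border faithfully returns $\mathcal{F}_0+{\rm Sym}\{\mathcal{B}[A~B]\mathcal{C}\}$ for the worst-case $[A~B]$. Once this dualization to $\tilde{\Theta}$ is pinned down, the remainder of the argument reduces entirely to the already-established Theorem~\ref{Th1}.
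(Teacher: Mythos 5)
Your proposal is correct and follows essentially the same route as the paper's proof: isolate the $[A~B]$-dependence of $\Psi$ via the decomposition $\Psi=\tilde{\Psi}+{\rm Sym}\{F[A~B]\,{\rm col}\{L_1,KL_{10}\}\}$, convert the data-consistency set into a quadratic constraint compatible with that decomposition using Assumption~\ref{Ass:matrix} (the paper does this by applying the dualization lemma \cite[Lemma 4.9]{Scherer2000} to \eqref{data:represent} \emph{before} invoking the full-block S-procedure, which is why $\tilde{\Theta}$ appears directly in the multiplier rather than via a post-hoc conversion as you describe), then apply the full-block S-procedure with multiplier $\varepsilon$, take Schur complements to absorb the $\mathcal{T}^{-1}$ and $\mathcal{R}^{-1}$ terms, and conclude stability from the vertex argument exactly as in Theorem~\ref{Th1}. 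The only difference from the paper is this cosmetic reordering of dualization versus S-procedure; the ingredients, the role of Assumption~\ref{Ass:matrix}, and the resulting LMIs \eqref{Th2:LMI1}--\eqref{Th2:LMI2} are identical.
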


\begin{proof}
According to \eqref{Th1:Vd} in the proof of Theorem \ref{Th1}, we have a conclusion that the system \eqref{sys:sampling} under the event-triggering scheme \eqref{sys:trigger} is stable if $\Upsilon_1(h)\prec0$ and $\Upsilon_2(h)\prec0$. By matrix decomposition, $\Upsilon_1(h)$ and $\Upsilon_2(h)$ are rewritten as
\begin{align}\label{Th2:decomposition1}
\Upsilon_1(h)&=\left[\begin{array}{cc}Al_1+BKl_{19}\\I \\\end{array}\right]^T
\left[\begin{array}{cc}0 & F^T\\\ast & \Upsilon_1(h)+\tilde{\Psi}- \Psi\\\end{array}\right][\cdot]\\
\label{Th2:decomposition2}
\Upsilon_2(h)&=\left[\begin{array}{cc}Al_1+BKl_{19}\\I \\\end{array}\right]^T
\left[\begin{array}{cc}0 & F^T\\\ast & \Upsilon_2(h)+\tilde{\Psi}- \Psi\\\end{array}\right][\cdot]
\end{align}

Besides, applying the dualization lemma \cite[Lemma 4.9]{Scherer2000} to the system representation in \eqref{data:represent} with Assumption \ref{Ass:matrix}, it can be proved that $[A~B]\in\Sigma_s$ if and only if
\begin{align}\label{Th2:dualization}
\left[\!\!\begin{array}{cc}[A~B]\\I \\\end{array}\!\!\right]^T
\left[\begin{array}{cc}-\tilde{R}_c & \tilde{S}_c^T\\\ast &  -\tilde{Q}_c \\\end{array}\right]
\left[\!\!\begin{array}{cc}[A~B]\\I \\\end{array}\!\!\right]
\succeq0.
\end{align}


Using the full-block S-procedure \cite{Sche2001}, we have $\Upsilon_1(h)\prec0$ and $\Upsilon_2(h)\prec0$ for any $[A ~B]\in\Sigma_s$ if there exists a scalar $\varepsilon>0$ such that
 \begin{align}\label{Th2:fullblock1}
&\tilde{\Upsilon}_1(h)+
\varepsilon \left[\begin{array}{cc}\mathcal{Y}_2^T\tilde{\Theta}\mathcal{Y}_2 & \mathcal{Y}_2^T\tilde{\Theta}\mathcal{Y}_1\\\ast &  \mathcal{Y}_1^T\tilde{\Theta}\mathcal{Y}_1 \\\end{array}\right]\prec0 \\ \label{Th2:fullblock2}
&\tilde{\Upsilon}_2(h)+
\varepsilon \left[\begin{array}{cc}\mathcal{Y}_2^T\tilde{\Theta}\mathcal{Y}_2 & \mathcal{Y}_2^T\tilde{\Theta}\mathcal{Y}_1\\\ast &  \mathcal{Y}_1^T\tilde{\Theta}\mathcal{Y}_1 \\\end{array}\right]\prec0.
\end{align}

Then, the Schur Complement Lemma renders \eqref{Th2:fullblock1} and \eqref{Th2:fullblock2} equivalent to \eqref{Th2:LMI1} and \eqref{Th2:LMI2}.
Similar to Theorem \ref{Th1}, we conclude that
system \eqref{sys:sampling} is asymptotically stable under triggering condition \eqref{sys:trigger} for any $[A ~B]\in \Sigma_s$, and $\eta(\tau_j-d)$ converges to the origin,
which completes the proof.
\end{proof}

Theorem \ref{Th2} allows us to analyze stability properties of dynamic event-triggered control with delays, without any model knowledge (cf. Assumption \ref{Ass:matrix:AB}).
A possibly large MSI and a triggering matrix $\Omega$ for the triggering condition \eqref{sys:trigger} can be searched for with a given controller gain $K$ by using Theorem \ref{Th2}. 
The application of Theorem \ref{Th2} is simple, requiring only the solution of LMIs which can be constructed based on noisy data. 
For event-triggered systems,
a larger MSI leads to a smaller transmission frequency, which saves network transmission resources.
To this end, we further investigate
a data-based stability condition for system \eqref{sys:sampling} with \emph {unknown} state matrix $A$ and \emph {known} input matrix $B$ (cf. Assumption \ref{Ass:matrix:A}), since any additional prior knowledge may lead to a less conservative data-based stability condition (i.e., a larger MSI) if compared to the one based only on the available data (cf. Theorem \ref{Th2}). The following stability result is established based on Lemma \ref{Lemma:system:data:K} and Assumption \ref{Ass:matrix:k}.




\begin{Theorem}[\emph {Data-based condition with unknown $A$ and known $B$}]\label{Th3}
For given scalars $\bar{h}>\underline{h}>0$, $\bar{d}>\underline{d}>0$, $\sigma_{1}>0$, $\sigma_{2}>0$, {\color{blue}$\lambda>0$, and $\theta>0$ satisfying $1-\lambda-\frac{1}{\theta}\geq0$, or $\theta=0$, system \eqref{sys:sampling} is asymptotically stable under the triggering condition in \eqref{sys:trigger} for any $[A]\in \bar{\Sigma}_s$, and $\eta(\tau_j-d)$ converges to the origin for any $\eta(0)\geq 0$}, if there exist a scalar $\varepsilon>0$, and matrices $P\succ0$, $Z\succ0$, {\color{blue}$R_1\succ0$, $R_2\succ0$, $\Omega\succ0$,
$S=S^T$, $N$, $M_1$, $M_2$,} $F$, such that the following LMIs hold for all $h\in\{\underline{h}, \bar{h}\}$ and $d\in\{\underline{d}, \bar{d}\}$
{\color{blue}\begin{align}{\label {Th3:LMI1}}
&\left[\!\!
  \begin{array}{cccc}
    \hat{\mathcal{G}}_3& \hat{\mathcal{G}}_2+F^T& 0\\
    \ast & \hat{\mathcal{G}}_1+\Xi_{0}+h\Xi_a+\hat{\Psi}+\mathcal{O}  & dN \\
    \ast & \ast & d\mathcal{T}
  \end{array}
\!\!\right]\!\!\prec0
\end{align}
\begin{align}{\label {Th3:LMI2}}
&\left[\!\!
 \begin{array}{cccc}
     \hat{\mathcal{G}}_3& \hat{\mathcal{G}}_2+F^T& \!0& \!0\\
    \ast & \hat{\mathcal{G}}_1+\Xi_0+h\Xi_b+\hat{\Psi}+\mathcal{O} & \!dN & \!h\mathcal{M}\\
    \ast & \ast & \!d\mathcal{T} &\!0\\
    \ast & \ast & \!\ast & \!h\mathcal{R}
  \end{array}
  \!\!\right]\prec0
\end{align}}
where
\begin{align*}
\hat{\Psi}&:={\rm Sym}\big\{F(BKL_{10}-L_{3})\big\}\\
\hat{\mathcal{G}}_1&:=\varepsilon\hat{\mathcal{Y}}_1^T\hat{\Theta}\hat{\mathcal{Y}}_1,~
\hat{\mathcal{G}}_2:=\varepsilon\hat{\mathcal{Y}}_2^T\hat{\Theta}\hat{\mathcal{Y}}_1,~
\hat{\mathcal{G}}_3:=\varepsilon\hat{\mathcal{Y}}_2^T\hat{\Theta}\hat{\mathcal{Y}}_2\\
\hat{\Theta}&:=\left[\begin{array}{cc}-\hat{R}_c & \hat{S}_c^T\\\ast &  -\hat{Q}_c \\\end{array}\right],~
\left[\begin{array}{cc}\hat{Q}_c & \hat{S}_c\\\ast & \hat{R}_c \\\end{array}\right]:=
\left[\begin{array}{cc}\bar{Q}_c & \bar{S}_c\\\ast & \bar{R}_c \\\end{array}\right]^{-1}\\
\hat{\mathcal{Y}}_1&:=[0~L_1^T]^T,~
\hat{\mathcal{Y}}_2:=[I~0]^T.
\end{align*}
\end{Theorem}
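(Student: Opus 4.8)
The plan is to mirror the proof of Theorem \ref{Th2}, exploiting the fact that knowledge of $B$ shrinks the uncertainty set from a pair $(A~B)$ to the single matrix $A$ characterized by $\bar{\Sigma}_s$ in Lemma \ref{Lemma:system:data:K}. From the proof of Theorem \ref{Th1} [cf. \eqref{Th1:Vd}], asymptotic stability together with convergence of $\eta(\tau_j-d)$ follows once $\Upsilon_1(h)\prec0$ and $\Upsilon_2(h)\prec0$ hold, where now $\Psi={\rm Sym}\{F(AL_1+BKL_{10}-L_3)\}$ carries the full closed-loop matrix. First I would isolate the uncertain part. Since $B$ is known, the product $BKL_{10}$ is retained inside the nominal functional through $\hat{\Psi}={\rm Sym}\{F(BKL_{10}-L_3)\}$, so that $\hat{\Psi}-\Psi=-{\rm Sym}\{F\,AL_1\}$ collects exactly the $A$-dependent contribution. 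This lets me recast $\Upsilon_1(h)$ and $\Upsilon_2(h)$ as quadratic forms in $[(AL_1)^T~~I]^T$, in direct analogy with the factorizations \eqref{Th2:decomposition1}--\eqref{Th2:decomposition2}; the only change is that the channel through which the uncertainty enters is $L_1$ alone, because $A$ multiplies $x(t)=L_1\xi$, rather than the combined channel $[L_1^T~(KL_{10})^T]^T$ of Theorem \ref{Th2}. This is precisely the role played by $\hat{\mathcal{Y}}_1=[0~L_1^T]^T$ and $\hat{\mathcal{Y}}_2=[I~0]^T$.

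Next I would characterize the admissible set $\bar{\Sigma}_s$ in dual form. Applying the dualization lemma \cite[Lemma 4.9]{Scherer2000} to the quadratic matrix inequality \eqref{data:represent:K}, and using Assumption \ref{Ass:matrix:k} to guarantee the requisite inertia of $\bar{\Theta}_s$, yields that $A\in\bar{\Sigma}_s$ is equivalent to the dual inequality with multiplier $\hat{\Theta}$ [in complete analogy with \eqref{Th2:dualization}]. I would then invoke the full-block S-procedure \cite{Sche2001} to merge this set description with the strict stability requirements: there exists a single scalar $\varepsilon>0$ such that the factored versions of $\Upsilon_1(h)$ and $\Upsilon_2(h)$, augmented by $\varepsilon$ times the block matrix formed from $\hat{\mathcal{Y}}_1$, $\hat{\mathcal{Y}}_2$, and $\hat{\Theta}$, are negative definite. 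These augmenting blocks are exactly $\hat{\mathcal{G}}_1$, $\hat{\mathcal{G}}_2$, and $\hat{\mathcal{G}}_3$.

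Finally, a Schur complement step on the $F^T$ coupling and on the $d\mathcal{T}^{-1}$ and $h\mathcal{R}^{-1}$ terms converts these negative-definiteness conditions into the LMIs \eqref{Th3:LMI1} and \eqref{Th3:LMI2}. Since the resulting inequalities are affine in $(h,d)$, verifying them at the four vertices $h\in\{\underline{h},\bar{h}\}$, $d\in\{\underline{d},\bar{d}\}$ certifies stability over the whole rectangle, and the convergence argument for $V_a$ and $\eta$ from Theorem \ref{Th1} carries over verbatim.

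The main obstacle is the bookkeeping in the first step: one must keep the \emph{known} product $BK$ inside the nominal term $\hat{\Psi}$ while extracting only $AL_1$ for the S-procedure, so that the uncertainty channel collapses from the $(n+m)$-column object of Theorem \ref{Th2} to the $n$-column object here. Getting this split right is what makes $\hat{\mathcal{Y}}_1$, and hence the dual multiplier structure, genuinely tighter than its counterpart in Theorem \ref{Th2}; the remaining manipulations---dualization, S-procedure, and Schur complement---are routine once the correct factorization is in place.
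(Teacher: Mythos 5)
Your proposal is correct and follows essentially the same route as the paper's proof: the same factorization of $\Upsilon_1(h)$ and $\Upsilon_2(h)$ as quadratic forms in $[(AL_1)^T~~I]^T$ with $\hat{\Psi}-\Psi=-{\rm Sym}\{FAL_1\}$ isolating the uncertainty, followed by the dualization lemma applied to \eqref{data:represent:K} under Assumption \ref{Ass:matrix:k}, the full-block S-procedure with a single multiplier $\varepsilon>0$, and a final Schur complement to arrive at \eqref{Th3:LMI1}--\eqref{Th3:LMI2}. Your added remarks on the uncertainty channel collapsing from $[L_1^T~(KL_{10})^T]^T$ to $L_1$ alone, and on the vertex argument in $(h,d)$, are exactly what the paper leaves implicit in its ``similar to Theorem \ref{Th2}'' steps.
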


\begin{proof}The full proof is similar to that of Theorem \ref{Th2}. Terms $\Upsilon_1(h)$ and $\Upsilon_2(h)$ in \eqref{Th2:decomposition1} and \eqref{Th2:decomposition2} can be rebuilt as follows
\begin{align}\label{Th3:decomposition1}
\Upsilon_1(h)&=\left[\begin{array}{cc}Al_1\\I \\\end{array}\right]^T
\left[\begin{array}{cc}0 & F^T\\\ast & \Upsilon_1(h)+\hat{\Psi}- \Psi \\\end{array}\right]\left[\begin{array}{cc}Al_1\\I \\\end{array}\right]\\
\label{Th3:decomposition2}
\Upsilon_2(h)&=\left[\begin{array}{cc}Al_1\\I \\\end{array}\right]^T
\left[\begin{array}{cc}0 & F^T\\\ast & \Upsilon_2(h)+\hat{\Psi}- \Psi\\\end{array}\right]\left[\begin{array}{cc}Al_1\\I \\\end{array}\right]
\end{align}

 Employing the dualization lemma to the system representation in \eqref{data:represent:K} and Assumption \ref{Ass:matrix:k}, we have that $[A]\in\bar{\Sigma}_s$ if and only if
 \begin{align}\label{Th3:dualization}
&\left[\begin{array}{cc}A\\I \\\end{array}\right]^T
\left[\begin{array}{cc}-\hat{R}_c & \hat{S}_c^T\\\ast &  -\hat{Q}_c \\\end{array}\right]
\left[\begin{array}{cc}A\\I \\\end{array}\right]
\succeq0.
\end{align}

According to the full-block S-procedure, it holds that $\Upsilon_1(h)\prec0$ and $\Upsilon_2(h)\prec0$ for
any $[A]\in \bar{\Sigma}_s$ if there exists a scalar
 $\varepsilon>0$ such that
\begin{align}\label{Th3:fullblock1}
&\left[\begin{array}{cc}0 & F^T\\\ast & \Upsilon_1(h)+\hat{\Psi}- \Psi \\\end{array}\right]+
\varepsilon \left[\begin{array}{cc}\hat{\mathcal{Y}}_2^T\hat{\Theta}\hat{\mathcal{Y}}_2 & \hat{\mathcal{Y}}_2^T\hat{\Theta}\hat{\mathcal{Y}}_1\\\ast &  \hat{\mathcal{Y}}_1^T\hat{\Theta}\hat{\mathcal{Y}}_1 \\\end{array}\right]\prec0 \\ \label{Th3:fullblock2}
&\left[\begin{array}{cc}0 & F^T\\\ast & \Upsilon_2(h)+\hat{\Psi}- \Psi\\\end{array}\right]+
\varepsilon \left[\begin{array}{cc}\hat{\mathcal{Y}}_2^T\hat{\Theta}\hat{\mathcal{Y}}_2 & \hat{\mathcal{Y}}_2^T\hat{\Theta}\hat{\mathcal{Y}}_1\\\ast &  \hat{\mathcal{Y}}_1^T\hat{\Theta}\hat{\mathcal{Y}}_1 \\\end{array}\right]\prec0.
\end{align}

Finally, similar to Theorem \ref{Th1}, we conclude that \eqref{Th3:LMI1} and \eqref{Th3:LMI2} are sufficient stability conditions for system \eqref{sys:sampling} under the triggering condition \eqref{sys:trigger} for any $[A]\in \bar{\Sigma}_s$, and $\eta(\tau_j-d)$ converges to the origin.
\end{proof}

Theorem \ref{Th3} provides a stability condition of system \eqref{sys:LTI} using the prior knowledge of the input matrix $B$ and data. 
Compared to Theorem \ref{Th2} based only on the available data, a larger MSI is able to be obtained by Theorem \ref{Th3} for a given controller gain; see numerical comparisons between Theorems \ref{Th2} and \ref{Th3} for different levels of noise given in Table \ref{Tab:noise} of Section \ref{sec:example}.

\subsection{Data-based co-designing controller and triggering matrix}\label{subsection:controller}
This section provides co-design methods of the controller gain $K$ and the triggering matrix $\Omega$ for system \eqref{sys:sampling} under Assumptions \ref{Ass:matrix:AB} and \ref{Ass:matrix:A}. For Assumptions \ref{Ass:matrix:AB}, 
we want to compute matrices $K$ and $\Omega$ using Theorem \ref{Th2} with a given sampling interval $h$ and the delay $d$.
However, since the matrix $\tilde{Q}_c$ is generally not negative definite,
inequalities \eqref{Th2:LMI1} and \eqref{Th2:LMI2} are actually not convex in $K$.
It is hence hard to design $K$ directly through Theorem \ref{Th2}. Inspired by the iterative approach in \cite[Theorems 4-5]{Berberich2020}, we give a reformulation of Theorem \ref{Th2} that is convex in $K$ for a fixed variable $F$. Therefore, the controller gain can be searched while simultaneously obtaining an MSI under Assumption \ref{Ass:matrix:AB}.

\begin{Theorem}[\emph {Co-designing under unknown $A$, $B$, and fixed matrix $F$}]\label{Th:Co-designing:AB:F}
For any given scalars $\bar{h}>\underline{h}>0$, $\bar{d}>\underline{d}>0$, $\sigma_{1}>0$, $\sigma_{2}>0$, {\color{blue}$\lambda>0$, and $\theta>0$ satisfying $1-\lambda-\frac{1}{\theta}\geq0$, or $\theta=0$,
there exists a controller gain $K$ such that
 system \eqref{sys:sampling} is asymptotically stable under the triggering condition in \eqref{sys:trigger} for any pair $[A ~B]\in \Sigma_s$, and that $\eta(\tau_j-d)$ converges to the origin for any $\eta(0)\geq 0$}, provided that there are $\varepsilon>0$, and matrices $P\succ0$, $Z\succ0$, $T\succ0$, {\color{blue}$R_1\succ0$, $R_2\succ0$, $\Omega\succ0$,
$S=S^T$, $N$, $M_1$, $M_2$,} $F$ such that the following LMIs hold for all $h\in\{\underline{h}, \bar{h}\}$ and $d\in\{\underline{d}, \bar{d}\}$
{\color{blue}\begin{align}{\label {Th:controller:ABF:LMI1}}
&\left[\!\!
  \begin{array}{cccc}
    \mathcal{S}_1& \mathcal{S}_2+\big[L_1^T~(KL_{10})^T\big]^T& 0\\
    \ast & \mathcal{S}_3+\Xi_0+h\Xi_a+\tilde{\Psi}+\mathcal{O}  & d N \\
    \ast & \ast & d \mathcal{T} \\
    \ast & \ast & \ast
  \end{array}
\!\!\right]\!\!\prec0
\end{align}
\begin{align}{\label {Th:controller:ABF:LMI2}}
&\left[\!\!
 \begin{array}{cccc}
     \mathcal{S}_1& \mathcal{S}_2+\big[L_1^T~(KL_{10})^T\big]^T& 0& 0\\
    \ast & \mathcal{S}_3+\Xi_{0}+h\Xi_b+\tilde{\Psi}+\mathcal{O}  & d N & h\mathcal{M}\\
    \ast & \ast &d\mathcal{T} &0\\
    \ast & \ast & \ast & h\mathcal{R}
  \end{array}
  \!\!\right]\prec0
\end{align}}
where
\begin{align*}
\mathcal{S}_1&:=\varepsilon\mathcal{D}_1\Theta_s\mathcal{D}_1^T,
\mathcal{S}_2:=\varepsilon\mathcal{D}_1\Theta_s\mathcal{D}_2^T,
\mathcal{S}_3:=\varepsilon\mathcal{D}_2\Theta_s\mathcal{D}_2^T\\
\mathcal{D}_1&:=
\left[\begin{array}{ccc}I & 0& 0\\
0 & 1 & 0\\\end{array}\right],~
\mathcal{D}_2:=
\left[\begin{array}{ccc}0 & 0& F^T\\\end{array}\right].
\end{align*}
\end{Theorem}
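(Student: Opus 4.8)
The plan is to reduce Theorem \ref{Th:Co-designing:AB:F} to Theorem \ref{Th2} by showing that the new LMIs \eqref{Th:controller:ABF:LMI1}--\eqref{Th:controller:ABF:LMI2} are nothing but a congruence-transformed restatement of \eqref{Th2:LMI1}--\eqref{Th2:LMI2}, with the crucial difference that the dualization step (which produced the inverse blocks $\tilde\Theta$ in Theorem \ref{Th2}) is bypassed when the multiplier $F$ is held \emph{fixed}. The key observation is that the full-block S-procedure used in Theorem \ref{Th2} can be applied in its \emph{primal} form: instead of the dualized inequality \eqref{Th2:dualization} with $\tilde\Theta$, I would invoke the primal data representation \eqref{data:represent} directly, so that the matrices $\mathcal{S}_1,\mathcal{S}_2,\mathcal{S}_3$ are built from $\Theta_s$ itself (via the selector matrices $\mathcal{D}_1,\mathcal{D}_2$) rather than from its inverse. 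This is exactly why $\Theta_s$, not $\tilde\Theta$, appears in the statement.

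First I would retrace the proof of Theorem \ref{Th2} up to the matrix decompositions \eqref{Th2:decomposition1}--\eqref{Th2:decomposition2}, noting that $\Upsilon_1(h)\prec0$ and $\Upsilon_2(h)\prec0$ remain the target inequalities, and that the only place $K$ enters nonconvexly is through the product $BKL_{10}$ sitting inside the outer factor $[Al_1+BKL_{10};\,I]$ together with the quadratic dependence hidden in $\tilde\Theta$ after dualization. With $F$ frozen, I would apply the full-block S-procedure to the \emph{primal} inclusion: $[A~B]\in\Sigma_s$ is characterized by \eqref{data:represent} with $\Theta_s$, and the multiplier $\varepsilon>0$ then yields inequalities analogous to \eqref{Th2:fullblock1}--\eqref{Th2:fullblock2} but with $\mathcal{D}_1\Theta_s\mathcal{D}_1^T$-type blocks in place of the $\mathcal{Y}^T\tilde\Theta\mathcal{Y}$ blocks. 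The selector $\mathcal{D}_1$ picks out the coordinates on which $[A~B]$ acts and $\mathcal{D}_2=[0~0~F^T]$ injects the fixed multiplier, so that after a Schur-complement expansion the products $\varepsilon\mathcal{D}_i\Theta_s\mathcal{D}_j^T$ reproduce precisely $\mathcal{S}_1,\mathcal{S}_2,\mathcal{S}_3$. Crucially, because $F$ is a constant rather than a decision variable, the off-diagonal coupling term $\mathcal{S}_2+[L_1^T~(KL_{10})^T]^T$ is now \emph{affine} in $K$, and the diagonal block $\mathcal{S}_3+\Xi_0+\cdots$ contains no product of $K$ with another unknown, rendering the whole LMI convex in the remaining decision variables $(K,\Omega,P,Z,T,R_1,R_2,S,N,M_1,M_2,\varepsilon)$.

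The main obstacle I anticipate is verifying that the \emph{primal} application of the full-block S-procedure is legitimate here, i.e. that Assumption \ref{Ass:matrix} (the inertia condition on $\Theta_s$) supplies the sign hypotheses the S-procedure lemma \cite{Sche2001} requires when one does not pass to the dual. In Theorem \ref{Th2} the dualization lemma \cite[Lemma 4.9]{Scherer2000} was exactly the device that converted the indefinite $\Theta_s$ into a form amenable to the S-procedure; skipping it means I must instead certify directly that the quadratic matrix inequality \eqref{data:represent} defines a set over which the primal S-procedure is lossless, which hinges on $\Theta_s$ having the prescribed $n_w$ positive eigenvalues. I would therefore devote care to checking that the congruence relating $\mathcal{S}_i$ (built from $\Theta_s$) to $\mathcal{G}_i$ (built from $\tilde\Theta$) is valid, and that the resulting sufficient condition indeed implies $\Upsilon_1(h)\prec0$, $\Upsilon_2(h)\prec0$ for every $[A~B]\in\Sigma_s$. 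Once this equivalence is established, the remainder closes exactly as in Theorem \ref{Th1}: the vertex conditions in $(h,d)$ guarantee $\dot V(t)+\eta(\tau_{j+1}-d)-\eta(\tau_j-d)<0$ on the whole box $[\underline{h},\bar{h}]\times[\underline{d},\bar{d}]$, and asymptotic convergence of both the state and $\eta(\tau_j-d)$ follows from the discrete decrease \eqref{Th1:vj:equal} together with Lemma \ref{lemma:nonneg.dynam}.
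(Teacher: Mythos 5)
Your proposal is, at its core, the paper's own argument: skip the dualization lemma, test the robust inequalities $\Upsilon_i(h)\prec0$ against the primal representation \eqref{data:represent} via the full-block S-procedure, with the frozen $F$ absorbed into the multiplier through $\mathcal{D}_2$ so that $K$ enters only affinely, then Schur-complement and conclude as in Theorems \ref{Th1}--\ref{Th2}. Two corrections to how you frame the details, however. First, the needed decomposition is not obtained by ``retracing'' \eqref{Th2:decomposition1}--\eqref{Th2:decomposition2}: the roles of $F$ and of the pair $(L_1,KL_{10})$ must be interchanged. The paper rewrites
\begin{align*}
\Upsilon_i(h)=\left[\begin{array}{c}[FA~\,FB]^T\\ I\end{array}\right]^T
\left[\begin{array}{cc}0 & \big[L_1^T~(KL_{10})^T\big]^T\\ \ast & \Upsilon_i(h)+\tilde\Psi-\Psi\end{array}\right]
\left[\begin{array}{c}[FA~\,FB]^T\\ I\end{array}\right]
\end{align*}
so that the uncertainty channel is $[FA~FB]^T=[A~B]^TF^T$, which is exactly what matches the transposed form of the QMI in \eqref{data:represent}, while $K$ sits in the off-diagonal block. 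Second, your anticipated ``main obstacle'' does not exist: the theorem claims only sufficiency of the LMIs, and for that direction the S-procedure requires neither losslessness nor any inertia property of $\Theta_s$. Concretely, if \eqref{Th:ABF:fullblock1}--\eqref{Th:ABF:fullblock2} hold, multiply them on both sides by the full-column-rank factor ${\rm col}\{[A~B]^TF^T,\,I\}$; since $D:={\rm col}\{\mathcal{D}_1,\mathcal{D}_2\}$ satisfies $D^T{\rm col}\{[A~B]^TF^T,\,I\}={\rm col}\{[A~B]^T,\,I\}F^T$, the multiplier term equals $\varepsilon F\big({\rm col}\{[A~B]^T,I\}^T\Theta_s\,{\rm col}\{[A~B]^T,I\}\big)F^T\succeq0$ for every $[A~B]\in\Sigma_s$, and $\Upsilon_i(h)\prec0$ follows at once. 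This assumption-free sufficiency is precisely why the primal route works with the indefinite $\Theta_s$, and why Assumption \ref{Ass:matrix} is needed only where dualization is actually invoked (Theorems \ref{Th2}--\ref{Th3}); Remark \ref{novelty} makes the analogous observation for Theorem \ref{Th:Co-designing:AB}. For the same reason, there is no congruence between the $\mathcal{S}_i$ blocks and the $\mathcal{G}_i$ blocks of Theorem \ref{Th2} to be verified --- the two LMIs are distinct sufficient certificates for the same robust inequality, and the proof never needs to relate them.
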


\begin{proof}
Similar to the proof of Theorem \ref{Th2}, we reformulate $\Upsilon_1(h)$ and $\Upsilon_2(h)$ as follows
\begin{align}\label{Th:ABF:decomposition1}
&{\Upsilon}_1(h)=
\left[\begin{array}{cc}[FA~FB]^T\\I \\\end{array}\right]^T
\left[\begin{array}{cc}0 & \big[L_1^T~(KL_{10})^T\big]^T\\\ast & \Upsilon_1(h)+\tilde{\Psi}- \Psi \\\end{array}\right]
[\cdot]\\
\label{Th4:ABF:decomposition2}
&{\Upsilon}_2(h)=
\left[\begin{array}{cc}[FA~FB]^T\\I \\\end{array}\right]^T
\left[\begin{array}{cc}0 & \big[L_1^T~(KL_{10})^T\big]^T\\\ast & \Upsilon_2(h)+\tilde{\Psi}- \Psi \\\end{array}\right]
[\cdot].
\end{align}

From the data-based representation \eqref{data:represent}, it holds that
\begin{align}
\left[\!\begin{array}{cc}[A~B]^T\\I \\\end{array}\!\right]^T
  \!\Theta_s\!
  \left[\!\begin{array}{cc}[A~B]^T \\
   I\\\end{array}\!\right]\succeq0.
\end{align}

By the full-block S-procedure, we have ${\Upsilon}_1(h)\prec0$ and ${\Upsilon}_2(h)\prec0$ for any $[A ~B]\in\Sigma_s$ if there exists a scalar $\varepsilon>0$ such that
 \begin{align}\label{Th:ABF:fullblock1}
&\left[\begin{array}{cc}0 & \big[L_1^T~(KL_{10})^T\big]^T\\\ast & \Upsilon_1(h)+\tilde{\Psi}- \Psi \\\end{array}\right]+
\varepsilon \left[\begin{array}{cc}\mathcal{D}_1\Theta_s\mathcal{D}_1^T & \mathcal{D}_1\Theta_s\mathcal{D}_2^T\\\ast &  \mathcal{D}_2\Theta_s\mathcal{D}_2^T \\\end{array}\right]  \notag\\ &\prec0  \\
\label{Th:ABF:fullblock2}
&\left[\begin{array}{cc}0 & \big[L_1^T~(KL_{10})^T\big]^T\\\ast & \Upsilon_2(h)+\tilde{\Psi}- \Psi \\\end{array}\right]+
\varepsilon \left[\begin{array}{cc}\mathcal{D}_1\Theta_s\mathcal{D}_1^T & \mathcal{D}_1\Theta_s\mathcal{D}_2^T\\\ast &  \mathcal{D}_2\Theta_s\mathcal{D}_2^T \\\end{array}\right]\notag \\
&\prec0.
\end{align}

Finally, similar to the proof of Theorem \ref{Th2}, we have a conclusion that \eqref{Th:controller:ABF:LMI1} and \eqref{Th:controller:ABF:LMI2} are sufficient stability conditions for system \eqref{sys:sampling}
under the triggering condition \eqref{sys:trigger} for any $[A ~B]\in \Sigma_s$, and $\eta(\tau_j-d)$ converges to the origin, which completes the proof.
\end{proof}
\begin{Remark}[\emph {Iterative approach}]
\textcolor{blue}{Notice that the inequalities \eqref{Th:controller:ABF:LMI1} and \eqref{Th:controller:ABF:LMI2} in Theorem \ref{Th:Co-designing:AB:F} are quadratic in the matrix $F$.
Hence, we cannot efficiently search for the matrix $K$ by Theorem \ref{Th:Co-designing:AB:F}.
Fixing the matrix $F$ enables the design of the controller; however, the conservatism of the obtained MSI estimate may be significantly increased.
To this end, we propose an iterative approach which alternates 
between solving the LMIs
in Theorems \ref{Th2} and \ref{Th:Co-designing:AB:F} while fixing the matrix $K$ or $F$, respectively.
The effectiveness of this method is illustrated via a numerical example in Section \ref{sec:example},
and the results of MSI bounds in Table \ref{Tab:iteration} show that such an iterative method tremendously improves the MSI bounds
in comparison to using a fixed controller gain as in Theorems \ref{Th2} and \ref{Th3} (see Table \ref{Tab:noise}).}
On the other hand, for the case of having a known input matrix $B$, we can search for a controller gain $K$ with possibly large MSI by alternating between solving the LMIs in Theorem \ref{Th3} while fixing matrix $K$ or matrix $F$, in both of which the LMIs are convex and can be efficiently solved.
Compared with the iterative method between Theorems \ref{Th2} and \ref{Th:Co-designing:AB:F} that are purely based on data, the one using Theorem \ref{Th3} that contains the prior knowledge of matrix $B$ may provide a larger MSI while iteratively searching for the corresponding $K$; see numerical comparisons in Table \ref{Tab:iteration}.
\end{Remark}
In the following, we provide a stability condition that is convex in all decision variables (different from Theorems \ref{Th3}, \ref{Th:Co-designing:AB:F} and \cite[Theorem 5]{Berberich2020}) for co-designing matrices $K$ and $\Omega$, at the price of additional conservatism with respect to the matrix $F$. 
For this purpose, we begin with an algebraically equivalent transformation of system \eqref{sys:sampling}.

Let $G\in \mathbb{R}^{n \times n}$ be a nonsingular matrix, and define $x(t)=Gz(t)$. system \eqref{sys:sampling} is transformed into
\begin{align}\label{Design:NCS}
\dot{z}(t)=G^{-1}AG z(t)+G^{-1}BK_c z(t_k-d),~~~~t\in[t_k,t_{k+1})
\end{align}
where $K_c:=KG$.

System \eqref{Design:NCS} exhibits the same stability behavior as system \eqref{sys:sampling}, and the triggering condition \eqref{sys:trigger}
is still valid.
Based on Theorem \ref{Th1} and the equivalent system expression in \eqref{Design:NCS}, we have the following stability result under Assumption \ref{Ass:matrix:AB}.
\begin{Theorem}[\emph {Co-designing under unknown $A$ and $B$}]\label{Th:Co-designing:AB}
For given scalars $\bar{h}>\underline{h}>0$, $\bar{d}>\underline{d}>0$, $\sigma_{1}>0$, $\sigma_{2}>0$, {\color{blue}$\lambda>0$, and $\theta>0$ satisfying $1-\lambda-\frac{1}{\theta}\geq0$, or $\theta=0$, there exists a controller gain $K$ such that system \eqref{sys:sampling} is asymptotically stable under the triggering condition in \eqref{sys:trigger} for any $[A ~B]\in \Sigma_s$,
and $\eta(\tau_j-d)$ converges to the origin for any $\eta(0)\geq 0$}, if there exist scalars $\varepsilon>0$, $\epsilon>0$, and matrices $P\succ0$, $Z\succ0$, $T\succ0$, {\color{blue}$R_1\succ0$, $R_2\succ0$, $\Omega\succ0$,
$S=S^T$, $N$, $M_1$, $M_2$,} $G$, $K_c$ such that the following LMIs hold for all $h\in\{\underline{h}, \bar{h}\}$ and $d\in\{\underline{d}, \bar{d}\}$
{\color{blue}
\begin{align}{\label {Th:controller:AB:LMI1}}
&\left[\!\!
  \begin{array}{cccc}
    \mathcal{B}_1& \mathcal{B}_2+\big[GL_1^T~K_cL_{10}^T\big]^T& 0\\
    \ast & \mathcal{B}_3+\Xi_0+h\Xi_a+\check{\Psi}+\mathcal{O}  & dN \\
    \ast & \ast & d\mathcal{T} \\
    \ast & \ast & \ast
  \end{array}
\!\!\right]\!\!\prec0
\end{align}
\begin{align}{\label {Th:controller:AB:LMI2}}
&\left[\!\!
 \begin{array}{cccc}
     \mathcal{B}_1& \mathcal{B}_2+\big[GL_1^T~K_cL_{10}^T\big]^T& 0& 0\\
    \ast & \mathcal{B}_3+\Xi_{0}+h\Xi_b+\check{\Psi}+\mathcal{O}  & dN & h\mathcal{M}\\
    \ast & \ast & d\mathcal{T} &0\\
    \ast & \ast & \ast & h\mathcal{R}
  \end{array}
  \!\!\right]\prec0\end{align}}
where
\begin{align*}
\check{\Psi}&:={\rm Sym}\big\{-(L_1^T+\epsilon L_3^T)GL_{3}\big\}\\
\mathcal{B}_1&:=\varepsilon\mathcal{V}_1\Theta_s\mathcal{V}_1^T,~
\mathcal{B}_2:=\varepsilon\mathcal{V}_1\Theta_s\mathcal{V}_2^T,~
\mathcal{B}_3:=\varepsilon\mathcal{V}_2\Theta_s\mathcal{V}_2^T\\
\mathcal{V}_1&:=
\left[\begin{array}{ccc}I & 0& 0\\
0 & 1 & 0\\\end{array}\right],~
\mathcal{V}_2:=
\left[\begin{array}{ccc}0 & 0& (L_1^T+\epsilon L_3^T)\\\end{array}\right].
\end{align*}
Moreover, the controller gain $K$ is given by $K=K_cG^{-1}$.
\end{Theorem}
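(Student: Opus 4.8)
The plan is to reduce the claim to the model-based looped-functional analysis of Theorem \ref{Th1} applied to the algebraically equivalent system \eqref{Design:NCS}, and then to re-inject the data through the full-block S-procedure, exactly as in the proofs of Theorems \ref{Th2} and \ref{Th:Co-designing:AB:F}, but with a \emph{structured} choice of the slack matrix that linearizes the controller variable. Since $G$ is nonsingular and $x(t)=Gz(t)$, system \eqref{Design:NCS} inherits the stability properties of \eqref{sys:sampling}, so it suffices to certify stability of \eqref{Design:NCS}. First I would run the looped-functional computation of Theorem \ref{Th1} on \eqref{Design:NCS}: all of $\Xi_0$, $\Xi_a$, $\Xi_b$, $\mathcal{O}$, $N$, $M_1$, $M_2$, $\mathcal{T}$, $\mathcal{R}$, $\mathcal{M}$ are untouched because they do not involve the system matrices, so that stability again follows from $\Upsilon_1(h)\prec0$ and $\Upsilon_2(h)\prec0$, the only change being that the zero equation \eqref{Th1:zero} now encodes $G^{-1}AGz+G^{-1}BK_cz(t_k-d)-\dot z=0$.

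The crux is the choice of the slack matrix $F$. Taking $F=(L_1^T+\epsilon L_3^T)G$ with a scalar $\epsilon>0$, rather than a free matrix, the products $F\,(G^{-1}AG)=(L_1^T+\epsilon L_3^T)AG$ and $F\,(G^{-1}BK_c)=(L_1^T+\epsilon L_3^T)BK_c$ recover a dependence on the \emph{original} pair $[A~B]$ multiplied on the right by the decision variables $G$ and $K_c$. The term $-FL_3$ then produces exactly $\check{\Psi}=\mathrm{Sym}\{-(L_1^T+\epsilon L_3^T)GL_3\}$, while the residual contribution is $\mathrm{Sym}\{(L_1^T+\epsilon L_3^T)[A~B]\,[\,(GL_1)^T~(K_cL_{10})^T\,]^T\}$. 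This is precisely the manoeuvre that makes the data representation usable: the unknown $[A~B]$ is again isolated in a single bilinear term whose companion factor $[\,(GL_1)^T~(K_cL_{10})^T\,]^T$ is \emph{linear} in $G$ and $K_c$, which is impossible to achieve on the $z$-coordinates $G^{-1}AG$ directly. It is this rigid coupling of $F$ to $G$ that removes the quadratic-in-$F$ obstruction of Theorem \ref{Th:Co-designing:AB:F}, at the stated price of added conservatism.

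Next I would expose this residual term as a quadratic form in $[\,[A~B]^T;\,I\,]$ and eliminate the unknown $[A~B]$ via the full-block S-procedure against the data constraint \eqref{data:represent}, introducing the multiplier $\varepsilon>0$ together with the selectors $\mathcal{V}_1,\mathcal{V}_2$; this yields the blocks $\mathcal{B}_1=\varepsilon\mathcal{V}_1\Theta_s\mathcal{V}_1^T$, $\mathcal{B}_2=\varepsilon\mathcal{V}_1\Theta_s\mathcal{V}_2^T$, $\mathcal{B}_3=\varepsilon\mathcal{V}_2\Theta_s\mathcal{V}_2^T$ with $\Theta_s$ in its \emph{primal} form (as in Theorem \ref{Th:Co-designing:AB:F}, and unlike the dualized $\tilde{\Theta}$ of Theorem \ref{Th2}, the distinction being dictated by the left factor now carrying $[A~B]$). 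A Schur complement on the $dN\mathcal{T}^{-1}N^T$ and $h\mathcal{M}\mathcal{R}^{-1}\mathcal{M}^T$ terms then turns $\Upsilon_1(h)\prec0$ and $\Upsilon_2(h)\prec0$ into the LMIs \eqref{Th:controller:AB:LMI1} and \eqref{Th:controller:AB:LMI2}, which remain affine in $(h,d)$ and hence are verified at the four vertices $(h,d)\in\{\underline h,\bar h\}\times\{\underline d,\bar d\}$. Feasibility returns $G$ and $K_c$, and the gain is recovered as $K=K_cG^{-1}$; nonsingularity of $G$ is automatic, since the $\dot z$-diagonal block inherits $+T$ from $\Xi_0$ and $-\epsilon(G+G^T)$ from $\check{\Psi}$, so that the strict inequality forces $\epsilon(G+G^T)\succ T\succ0$.

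The main obstacle I anticipate is the bookkeeping that confirms the two substitutions---$x=Gz$ and the structured $F=(L_1^T+\epsilon L_3^T)G$---render every block of \eqref{Th:controller:AB:LMI1}--\eqref{Th:controller:AB:LMI2} \emph{jointly linear} in $G$, $K_c=KG$, $\varepsilon$, $\epsilon$, and the looped-functional matrices $P,Z,T,R_1,R_2,\Omega,S,N,M_1,M_2$, so that the conditions are genuine LMIs rather than merely quasi-convex ones. Verifying that the $\mathcal{B}_i$ blocks are formed with the correct selectors $\mathcal{V}_1,\mathcal{V}_2$ against the primal $\Theta_s$, and that no hidden product of two decision variables survives in $\check{\Psi}$ or in the off-diagonal $\mathcal{B}_2+[\,(GL_1)^T~(K_cL_{10})^T\,]^T$, is the delicate part; everything else is a direct transcription of the steps already validated in Theorems \ref{Th1}--\ref{Th:Co-designing:AB:F}.
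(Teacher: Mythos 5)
Your core argument reproduces the paper's proof essentially step by step: the coordinate change $x=Gz$ giving system \eqref{Design:NCS}, the structured slack term (your $F=(L_1^T+\epsilon L_3^T)G$ is exactly the paper's zero equation \eqref{Th:controller:AB:zero}, $0=2[z+\epsilon\dot z]^TG[G^{-1}AGz+G^{-1}BK_cz(t_k-d)-\dot z]$), the isolation of $[A~B]$ against the companion factor that is linear in $G,K_c$ as in \eqref{Th4:decomposition1}, the full-block S-procedure against the \emph{primal} $\Theta_s$ of \eqref{data:represent} (which, as you correctly note, is what removes the dualization step and Assumption \ref{Ass:matrix}), the Schur complement, and the vertex argument in $(h,d)$. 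That part is correct and is the same route as the paper.

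However, your closing claim that nonsingularity of $G$ is ``automatic'' is wrong, and it matters because $K=K_cG^{-1}$ requires it. The $\dot z$-diagonal block of the $(2,2)$ entry of \eqref{Th:controller:AB:LMI1} is not $T-\epsilon(G+G^T)$: it also collects $Z_{22}$ (from $\Pi_4 Z\Pi_4^T$), $hR_1$ (from $h\Xi_a$), and, crucially, $\varepsilon\epsilon^2R_c$ from $\mathcal{B}_3=\varepsilon\mathcal{V}_2\Theta_s\mathcal{V}_2^T$, where $R_c$ is the lower-right block of $\Theta_s$. So the strict inequality only yields
\begin{equation*}
\epsilon(G+G^T)\succ T+Z_{22}+hR_1+\varepsilon\epsilon^2R_c ,
\end{equation*}
and $R_c$ is indefinite in general; e.g., with $Q_d=-I$, $S_d=0$, $R_d=\bar w^2\rho I$ one has $R_c=-\dot X\dot X^T+\bar w^2\rho B_wB_w^T$, which is typically \emph{negative} definite for informative data. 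Hence $G+G^T\succ0$, and a fortiori invertibility of $G$, cannot be deduced from the LMI. The paper does not attempt this deduction: it imposes nonsingularity of $G$ when defining the transformation \eqref{Design:NCS} (``Let $G$ be a nonsingular matrix''), i.e., invertibility is an assumption on the decision variable, not a consequence. A second, smaller overstatement: the conditions are not jointly linear in $\epsilon$, since $\check{\Psi}$ contains the product $\epsilon G$ and $\mathcal{B}_2,\mathcal{B}_3$ contain $\varepsilon\epsilon$ and $\varepsilon\epsilon^2$; they are LMIs only after fixing the scalar $\epsilon$ (the multiplier $\varepsilon$ does enter linearly), which is how the paper uses them (e.g., $\epsilon=2$ in the examples).
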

\begin{proof}
We construct a functional $V(z,t)$ for system \eqref{Design:NCS} by substituting $x$ of the functional $V(x,t)$ in \eqref{Th1:Vt} with $z$.
Similar to the proof of Theorem \ref{Th1},
it can be obtained that
 \begin{align}\label{Th4:Vd}
\dot V(z,t)\leq&~ \xi^{T}(z,t)\left[\frac{\tau_{j+1}-t}{h}\Upsilon_1(h)
+\frac{t-\tau_j}{h}\Upsilon_2(h) \right]\xi(z,t).
\end{align}

According to system \eqref{Design:NCS}, we have that
\begin{align}{\label {Th:controller:AB:zero}}
0=&2\big[z(t)+\epsilon \dot z(t)\big]\times \notag\\
&\times G\big[G^{-1}AG z(t)+G^{-1}BK_cz(t_k-d)-\dot z(t) \big].
\end{align}
Replacing the equality \eqref{Th1:zero} by \eqref{Th:controller:AB:zero} in the proof of Theorem \ref{Th1}, it can be obtained that
 \begin{align}\label{Th4:Vd}
\dot V(z,t)\leq&~ \xi^{T}(z,t)\left[\frac{\tau_{j+1}-t}{h}\check{\Upsilon}_1(h)
+\frac{t-\tau_j}{h}\check{\Upsilon}_2(h) \right]\xi(z,t).
\end{align}
where $\check{\Upsilon}_i(h):={\rm Sym}\big\{(L_1^T+\epsilon L_3^T)(AGL_1+BK_cL_{10})\big\}+{\Upsilon}_i(h)-{\Psi}+\check{\Psi}$ for $i=1,2$.
Then, terms $\check{\Upsilon}_1(h)$ and $\check{\Upsilon}_2(h)$ are rewritten as
\begin{align}\label{Th4:decomposition1}
&\check{\Upsilon}_i(h)=
\left[\!\begin{array}{cc}(L_1^T+\epsilon L_3^T)[A~B]^T\\I \\\end{array}\!\right]^T
\left[\!\begin{array}{cc}0 \!&\! \big[GL_1^T~K_cL_{10}^T\big]^T\\\ast \!&\! \Upsilon_i(h)-\Psi +\check{\Psi} \\\end{array}\!\right]
[\cdot]
\end{align}

According to the data-based representation in \eqref{data:represent}, it holds that
\begin{align}
\left[\!\begin{array}{cc}[A~B]^T\\I \\\end{array}\!\right]^T
  \!\Theta_s\!
  \left[\!\begin{array}{cc}[A~B]^T \\
   I\\\end{array}\!\right]\succeq0.
\end{align}

By the full-block S-procedure, we have $\check{\Upsilon}_1(h)\prec0$ and $\check{\Upsilon}_2(h)\prec0$ for any $[A ~B]\in\Sigma_s$ if there exists a scalar $\varepsilon>0$ such that
 \begin{align}\label{Th4:fullblock1}
&\left[\begin{array}{cc}0 & \big[GL_1^T~K_cL_{10}^T\big]^T\\\ast & \Upsilon_1(h)-\Psi +\check{\Psi} \\\end{array}\right]+
\varepsilon \left[\begin{array}{cc}\mathcal{V}_1\Theta_s\mathcal{V}_1^T & \mathcal{V}_1\Theta_s\mathcal{V}_2^T\\\ast &  \mathcal{V}_2\Theta_s\mathcal{V}_2^T \\\end{array}\right]\prec0 \\ \label{Th4:fullblock2}
&\left[\begin{array}{cc}0 & \big[GL_1^T~K_cL_{10}^T\big]^T\\\ast & \Upsilon_2(h)-\Psi +\check{\Psi} \\\end{array}\right]+
\varepsilon \left[\begin{array}{cc}\mathcal{V}_1\Theta_s\mathcal{V}_1^T & \mathcal{V}_1\Theta_s\mathcal{V}_2^T\\\ast &  \mathcal{V}_2\Theta_s\mathcal{V}_2^T \\\end{array}\right]\prec0.
\end{align}

Finally, similar to the proof of Theorem \ref{Th2}, we have a conclusion that \eqref{Th:controller:AB:LMI1} and \eqref{Th:controller:AB:LMI2} are sufficient stability conditions for System \eqref{Design:NCS}
under the triggering condition \eqref{sys:trigger} for any $[A ~B]\in \Sigma_s$, and $\eta(\tau_j-d)$ converges to the origin.
Since $G$ is a nonsingular matrix, system \eqref{Design:NCS} exhibits the same stability behavior as system \eqref{sys:sampling}, which completes the proof.
\end{proof}

\begin{Remark}[\emph {Novelty}] \label{novelty}
{\color{blue}
The design procedure of Theorem \ref{Th:Co-designing:AB} requires no model knowledge, similar to \cite{Berberich2020}, but only a single open-loop data trajectory affected by noise.}
Note that LMIs \eqref{Th:controller:AB:LMI1} and \eqref{Th:controller:AB:LMI2} are convex in all decision variables, in contrast to Theorems \ref{Th2}-\ref{Th:Co-designing:AB:F} and \cite[Theorems 4-5]{Berberich2020}.
We can obtain a controller gain $K$ with an allowable MSI directly by solving the LMIs in Theorem \ref{Th:Co-designing:AB} without fixing any variables. In essence, Theorem \ref{Th:Co-designing:AB} is established at the price of increasing the conservatism of free matrix $F$; that is, replacing the equality \eqref{Th1:zero} with \eqref{Th:controller:AB:zero} in the proof of Theorem \ref{Th1}.
As a result, the free matrix $F$ of dimensions $10n\times n$ is limited to the matrix $G$ of $n\times n$. That is to say, Theorem \ref{Th:Co-designing:AB} is more conservative than Theorems \ref{Th2}-\ref{Th:Co-designing:AB:F}. 
This conclusion is further corroborated through a numerical comparison in Table \ref{Tab:noise:k}.
Besides,
in contrast to Theorems \ref{Th2}--\ref{Th3} and \cite[Theorem 4]{Berberich2020}, the restriction of requiring an invertible matrix $\Theta_s$ (cf. Assumption \ref{Ass:matrix}) is removed in Theorem \ref{Th:Co-designing:AB}.
 The main reason is that the terms in \eqref{Th4:decomposition1}
 match the inequality \eqref{data:represent} of Lemma \ref{Lemma:system:data} exactly.
 We then do not need the intermediate step of computing $\Theta_s^{-1}$ (namely, \eqref{Th2:dualization} in Theorem \ref{Th2}) when designing the controller gain $K$ by means of Theorem \ref{Th:Co-designing:AB}.
 Thus, this incurs lower computational complexity and no restriction (Assumption \ref{Ass:matrix}) in co-designing the controller gain and the triggering matrix $\Omega$.
\end{Remark}

Next, we consider using the prior knowledge of input matrix $B$ for obtaining a larger MSI of system \eqref{sys:sampling} under the triggering condition \eqref{sys:trigger} in the co-design method of Theorem \ref{Th:Co-designing:AB}. For this purpose,
a co-design method of the controller gain $K$ and the triggering matrix $\Omega$ for system \eqref{sys:sampling} with an \emph {unknown} state matrix $A$ and any \emph {known} input matrix $B$ (cf. Assumption \ref{Ass:matrix:A}) is given based on Theorem \ref{Th:Co-designing:AB}.

\begin{Theorem}[\emph {Co-designing under unknown $A$ and known $B$}]\label{Th:controller:A}
For given scalars $\bar{h}>\underline{h}>0$, $\bar{d}>\underline{d}>0$, $\sigma_{1}>0$, $\sigma_{2}>0$, {\color{blue}$\lambda>0$, and $\theta>0$ satisfying $1-\lambda-\frac{1}{\theta}\geq0$, or $\theta=0$, there exists a controller gain $K$ such that system \eqref{sys:sampling} is asymptotically stable under the triggering condition in \eqref{sys:trigger} for any $[A]\in \bar{\Sigma}_s$,
and $\eta(\tau_j-d)$ converges to the origin for any $\eta(0)\geq 0$}, if there exist scalars $\varepsilon>0$, $\epsilon>0$, and matrices $P\succ0$, $Z\succ0$, $T\succ0$, {\color{blue}$R_1\succ0$, $R_2\succ0$, $\Omega\succ0$,
$S=S^T$, $N$, $M_1$, $M_2$,} $G$, $K_c$ such that the following LMIs hold for $h\in\{\underline{h}, \bar{h}\}$ and $d\in\{\underline{d}, \bar{d}\}$
{\color{blue}\begin{align}{\label {Th:controller:A:LMI1}}
&\left[\!\!
  \begin{array}{cccc}
  \grave{ \mathcal{B}}_1& \grave{\mathcal{B}}_2+Gl_1& 0\\
    \ast & \grave{\mathcal{B}}_3+\Xi_{0}+h\Xi_a+\grave{\Psi}+\mathcal{O}  & dN \\
    \ast & \ast & d\mathcal{T} \\
    \ast & \ast & \ast
  \end{array}
\!\!\right]\!\!\prec0
\end{align}
\begin{align}{\label {Th:controller:A:LMI2}}
&\left[\!\!
 \begin{array}{cccc}
     \grave{\mathcal{B}}_1& \grave{\mathcal{B}}_2+Gl_1& 0& 0\\
    \ast & \grave{\mathcal{B}}_3+\Xi_{0}+h\Xi_b+\grave{\Psi}+\mathcal{O}  & dN & h\mathcal{M}\\
    \ast & \ast & d\mathcal{T} &0\\
    \ast & \ast & \ast & h\mathcal{R}
  \end{array}
  \!\!\right]\prec0\end{align}}
where
\begin{align*}
\grave{\Psi}&:={\rm Sym}\big\{(L_1^T+\epsilon L_3^T)(BK_cl_{19}-Gl_{18})\big\}\\
\grave{\mathcal{B}}_1&:=\varepsilon\grave{\mathcal{V}}_1\Theta_s\grave{\mathcal{V}}_1^T,~
\grave{\mathcal{B}}_2:=\varepsilon\grave{\mathcal{V}}_1\Theta_s\grave{\mathcal{V}}_2^T,~
\grave{\mathcal{B}}_3:=\varepsilon\grave{\mathcal{V}}_2\Theta_s\grave{\mathcal{V}}_2^T\\
\grave{\mathcal{V}}_1&:=
\left[\begin{array}{ccc}I & 0\\\end{array}\right],~
\grave{\mathcal{V}}_2:=
\left[\begin{array}{ccc} 0& (L_1^T+\epsilon L_3^T)\\\end{array}\right].
\end{align*}
Moreover, the controller gain $K$ is given by $K=K_cG^{-1}$.
\end{Theorem}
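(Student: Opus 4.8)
The plan is to transcribe the argument of Theorem~\ref{Th:Co-designing:AB} to the known-$B$ setting, replacing the two-block data representation of Lemma~\ref{Lemma:system:data} by the one-block representation \eqref{data:represent:K} of Lemma~\ref{Lemma:system:data:K}. First I would pass to the algebraically equivalent closed loop \eqref{Design:NCS} via $x(t)=Gz(t)$, $K_c:=KG$, build the functional $V(z,t)$ exactly as in the proof of Theorem~\ref{Th:Co-designing:AB}, and reproduce the same convex-combination bound $\dot V(z,t)\le\xi^T(z,t)[\frac{\tau_{j+1}-t}{h}\grave{\Upsilon}_1(h)+\frac{t-\tau_j}{h}\grave{\Upsilon}_2(h)]\xi(z,t)$, where $\grave{\Upsilon}_1,\grave{\Upsilon}_2$ are the known-$B$ analogues of $\check{\Upsilon}_1,\check{\Upsilon}_2$. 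As before, it then suffices to enforce $\grave{\Upsilon}_1(h)\prec0$ and $\grave{\Upsilon}_2(h)\prec0$ to conclude asymptotic stability together with $\eta(\tau_j-d)\to0$.

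The decisive difference is how the descriptor summand is split. Premultiplying the identity $G\dot z(t)=AGz(t)+BK_cz(t_k-d)$ by $2[z(t)+\epsilon\dot z(t)]^T$ yields a term that vanishes along trajectories and, in the coordinates of $\xi$, contributes ${\rm Sym}\{(L_1^T+\epsilon L_3^T)(AGL_1+BK_cL_{10}-GL_3)\}$ to the matrix bounding $\dot V$. Since $B$ is now known, I would absorb both the input term $BK_cL_{10}$ and the derivative term $-GL_3$ into the fixed slack $\grave{\Psi}={\rm Sym}\{(L_1^T+\epsilon L_3^T)(BK_cL_{10}-GL_3)\}$, leaving ${\rm Sym}\{(L_1^T+\epsilon L_3^T)\,A\,(GL_1)\}$ as the \emph{only} uncertain contribution in $\grave{\Upsilon}_i(h)$. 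This is exactly what collapses the uncertainty channel from the pair $[A~B]$ down to $A$ alone.

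Next I would recast each $\grave{\Upsilon}_i(h)$ as a single quadratic form in which $A$ enters only through the sandwiched product $(L_1^T+\epsilon L_3^T)\,A\,(GL_1)$, with the known coefficient $GL_1$ serving as the off-diagonal (cross) block. The outer factors then match the one-sided representation \eqref{data:represent:K} exactly, so that applying the full-block S-procedure \cite{Sche2001} with $\bar{\Theta}_s$ and Assumption~\ref{Ass:matrix:k}, via the selectors $\grave{\mathcal{V}}_1=[I~0]$ and $\grave{\mathcal{V}}_2=[0~(L_1^T+\epsilon L_3^T)]$, gives: $\grave{\Upsilon}_1(h)\prec0$ and $\grave{\Upsilon}_2(h)\prec0$ for every $A\in\bar{\Sigma}_s$ provided there is $\varepsilon>0$ rendering the $\varepsilon$-weighted blocks $\grave{\mathcal{B}}_1,\grave{\mathcal{B}}_2,\grave{\mathcal{B}}_3$ admissible. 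A final Schur-complement step on the $d\mathcal{T}^{-1}N^T$ and $h\mathcal{M}\mathcal{R}^{-1}\mathcal{M}^T$ summands delivers the affine-in-$(h,d)$ LMIs \eqref{Th:controller:A:LMI1}--\eqref{Th:controller:A:LMI2}; verifying them at the four vertices of $[\underline{h},\bar{h}]\times[\underline{d},\bar{d}]$ covers the whole box, and since $G$ is nonsingular the gain is recovered as $K=K_cG^{-1}$.

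The main obstacle, inherited from Theorem~\ref{Th:Co-designing:AB}, is the convexity-preserving bookkeeping in the middle step: the congruence forces the descriptor multiplier $F$ of Theorem~\ref{Th1} into the restricted form $(L_1^T+\epsilon L_3^T)G$, so that $A$ appears solely as $(L_1^T+\epsilon L_3^T)A(GL_1)$ and the outer factors coincide \emph{exactly} with \eqref{data:represent:K}. This exact match is what lets the S-procedure act on the primal representation directly---without the intermediate inversion of $\bar{\Theta}_s$ used in Theorem~\ref{Th3}---and it is simultaneously the source of the extra conservatism relative to Theorem~\ref{Th3}, since the previously free $10n\times n$ matrix $F$ is now tied to the single $n\times n$ matrix $G$ together with the scalar $\epsilon$.
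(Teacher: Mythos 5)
Your proposal matches the paper's proof essentially step for step: pass to the equivalent system \eqref{Design:NCS}, absorb the known-$B$ input term and the derivative term into the fixed slack $\grave{\Psi}$ so that $A$ is the only remaining uncertainty, rewrite $\check{\Upsilon}_i(h)$ as a quadratic form whose outer factors match the primal representation \eqref{data:represent:K} exactly, and close with the full-block S-procedure, Schur complements, and vertex evaluation over $[\underline{h},\bar{h}]\times[\underline{d},\bar{d}]$, recovering $K=K_cG^{-1}$. One small correction: your middle step cites Assumption~\ref{Ass:matrix:k}, but it is not needed here (and the paper deliberately dispenses with it, cf.\ Remark~\ref{novelty}); as your own final paragraph observes, the S-procedure acts on the primal representation without inverting $\bar{\Theta}_s$, and only the dualization-based Theorem~\ref{Th3} requires that assumption.
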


\begin{proof} The whole proof is similar to Theorem \ref{Th:Co-designing:AB}, except that
we re-express terms $\check{\Upsilon}_1(h)$ and $\check{\Upsilon}_2(h)$ as, for $i=1,2$
\begin{align}\label{Th5:decomposition1}
&\check{\Upsilon}_i(h)=
\left[\begin{array}{cc}[(L_1^T+\epsilon L_3^T)A]^T\\I \\\end{array}\right]^T
\left[\begin{array}{cc}0 & Gl_1\\\ast & \Upsilon_i(h)-\Psi +\grave{\Psi} \\\end{array}\right][\cdot]
\end{align}

Then, from the data-based representation \eqref{data:represent:K}, it holds that
\begin{align}
\left[\!\begin{array}{cc}A^T\\I \\\end{array}\!\right]^T
  \!\Theta_s\!
  \left[\!\begin{array}{cc}A^T \\
   I\\\end{array}\!\right]\succeq0.
\end{align}

Through the full-block S-procedure, it holds that $\check{\Upsilon}_1(h)\prec0$ and $\check{\Upsilon}_2(h)\prec0$  for
any $[A]\in \bar{\Sigma}_s$ if there exists a scalar $\varepsilon>0$ such that
 \begin{align}\label{Th5:fullblock1}
&\left[\begin{array}{cc}0 & Gl_1\\\ast & \Upsilon_1(h)-\Psi +\grave{\Psi} \\\end{array}\right]+
\varepsilon \left[\begin{array}{cc}\grave{\mathcal{V}}_1\Theta_s\grave{\mathcal{V}}_1^T & \grave{\mathcal{V}}_1\Theta_s\grave{\mathcal{V}}_2^T\\\ast &  \grave{\mathcal{V}}_2\Theta_s\grave{\mathcal{V}}_2^T \\\end{array}\right]\prec0 \\ \label{Th5:fullblock2}
&\left[\begin{array}{cc}0 & Gl_1\\\ast & \Upsilon_2(h)-\Psi +\grave{\Psi} \\\end{array}\right]+
\varepsilon \left[\begin{array}{cc}\grave{\mathcal{V}}_1\Theta_s\grave{\mathcal{V}}_1^T & \grave{\mathcal{V}}_1\Theta_s\grave{\mathcal{V}}_2^T\\\ast &  \grave{\mathcal{V}}_2\Theta_s\grave{\mathcal{V}}_2^T \\\end{array}\right]\prec0.
\end{align}

Finally, similar to Theorem \ref{Th:Co-designing:AB}, we conclude that \eqref{Th:controller:A:LMI1} and \eqref{Th:controller:A:LMI2} are sufficient stability conditions for system \eqref{sys:sampling} under the triggering condition \eqref{sys:trigger} for any $[A]\in \bar{\Sigma}_s$, and $\eta(\tau_j-d)$ converges to the origin.
\end{proof}

Theorem \ref{Th:controller:A} provides a co-design method for obtaining matrices $K$ and $\Omega$. Compared
to Theorem \ref{Th:Co-designing:AB}, which is based only on the data, a larger MSI bounds may be computed by
 Theorem \ref{Th:controller:A}; see the numerical comparison between Theorems \ref{Th:Co-designing:AB} and \ref{Th:controller:A}
for different noise bounds given in Table \ref{Tab:noise:k}.

{\color{blue}
\begin{Remark}[\emph {Alternative approach}]
Sampled-data control systems, in which information is transmitted over wired or wireless networks having limited bandwidth \cite{Zhang2001}, have been widely studied over the last two decades. Event-triggered control has emerged as a paradigm for sampled-data systems \cite{Heemels2008}. In existing works on event-triggered control, exact knowledge of a plant is assumed. However,  in practical situations, it is often difficult if not impossible to acquire an accurate system model.
This motivates well the combination of data-driven control and event-triggering control for  sampled-data systems. In our proposed approach, the controller and the triggering protocol are simultaneously designed without explicit knowledge of the system model.
To the authors' knowledge, the only alternative to our approach in the current literature would be first performing system identification, using e.g., least-squares estimation of the system matrices, followed by event-triggered control that has been investigated in \cite{Yue2013,Liu2018}.
However, such an identification-based event-triggered control does in general not provide stability guarantees, in particular when the data are affected by noise.
This is due to the fact that i) providing tight estimation bounds for system identification is  challenging, compare  e.g., \cite{Matni2019}; and ii) \cite{Yue2013,Liu2018} only provide nominal results, i.e., potential error bounds, e.g., arising from system identification are not handled systematically.
In contrast, our event-triggered control approach admits rigorous stability guarantees based directly on noisy data. 
\end{Remark}}

\section{Example and Simulation}\label{sec:example}
\textcolor{blue}{Two numerical examples are presented in this section to demonstrate the effectiveness of our proposed methods.}

\begin{Example}\label{ex:2class}
\emph{Consider system \eqref{sys:sampling} under the transmission scheme \eqref{sys:trigger} with parameters in \cite{Zhang2001}
\begin{equation*}
A=\left[
  \begin{array}{cc}
    0& 1 \\
    0 & -0.1 \\
  \end{array}
\right],~~
B=\left[
  \begin{array}{cc}
    0 \\
    0.1\\
  \end{array}
\right].
\end{equation*}
\textcolor{blue}{Assume that the matrices $A$ and $B$ are \emph{unknown}. We consider data-driven event-triggered control of system \eqref{sys:sampling}. To this end, we generate measurements $\{\dot x(T_i),\,x(T_i),\,u(T_i)\}^{100}_{i=1}$ with the interval $T_{i+1}-T_i$ satisfying
\begin{equation}\label{example:sampling}
T_{i+1}-T_i=\left\{\begin{array}{lll}
1&,&t\in [1,\, 49]\\
2&, &t\in [50,\, 99]
\end{array}
\right.
\end{equation}
where the data-generating input was sampled uniformly from $u(t)\in [-1,~1]$.
The measured data were perturbed by a disturbance distributed uniformly over $w(t)\in [-\bar{w},\,\bar{w}]$ for $\bar{w}>0$. According to Remark \ref{remark:noise}, $w(t)$ satisfies Assumption \ref{Ass:disturbance} with $Q_d=-I$, $S_d=0$, and $R_d=\bar{w}^2\rho I$ $(\rho=100)$.}}
\end{Example}

%
\begin{table}[t]
{\color{blue}
\caption{Maximum allowable $\bar{h}$ with $\underline{h}=10^{-5}$ and delay $\underline{d}=0$ for different bounds $\bar{w}$, $\bar{d}$, and given $K=-[3.75~ 11.5]$.}
\begin{center}      
\setlength{\tabcolsep}{6pt}
\begin{tabular}{lccccccccc}
\hline\noalign{\smallskip}
$\bar{w}$ & 0.005 & 0.01 & 0.02 & 0.03&0.04& 0.05\\
\noalign{\smallskip}\hline\noalign{\smallskip}
Theorem \ref{Th2} ($\bar{d}=0$)&1.04& 0.96&  0.81&  0.72&   0.64&  0.57\\
Theorem \ref{Th3} ($\bar{d}=0$)&1.12&  1.09& 1.04& 1.00& 0.99&0.91\\
\noalign{\smallskip}\hline\noalign{\smallskip}
Theorem \ref{Th2} ($\bar{d}=0.1$)&0.78&  0.70&  0.58&  0.47&   0.38&  0.31\\
Theorem \ref{Th3} ($\bar{d}=0.1$)&0.86&  0.84& 0.82& 0.76& 0.73& 0.71\\
\noalign{\smallskip}\hline\noalign{\smallskip}
Theorem \ref{Th2} ($\bar{d}=0.2$)&0.65& 0.57& 0.44& 0.35&   0.27&  0.21\\
Theorem \ref{Th3} ($\bar{d}=0.2$)&0.73&  0.71& 0.68& 0.64& 0.61& 0.59\\
\noalign{\smallskip}\hline
\end{tabular}
\end{center}
\label{Tab:noise}}
\end{table}
\textcolor{blue}{(\emph{MSI bounds})}
Using Theorem \ref{Th2} and controller gain $K=-[3.75~ 11.5]$, possibly large values of $\bar{h}$ with $\underline{h}=10^{-5}$ and $\underline{d}=0$ for different realizations of $\bar{w}$ and $\bar{d}$ were computed and are presented in Table \ref{Tab:noise}.
It can be observed that larger values of $\bar{w}$ lead to a higher sampling frequency to guarantee stability, which is consistent with \cite{Berberich2020}. {\color{blue}Besides, from Table \ref{Tab:noise}, the increase of transmission delay reduces the MSI bound, i.e., more transmissions are required for stability.
}
Furthermore, we computed possibly large bounds $\bar{h}$ for different $\bar{w}$ by using Theorem \ref{Th3}, i.e., assuming that the input matrix B is known.
As shown in Table \ref{Tab:noise}, Theorem \ref{Th3} provides larger values of $\bar{h}$ than Theorem \ref{Th2}, which confirms that additional prior knowledge may lead to less
conservative data-based stability conditions compared with those established solely on the available data.

In Table \ref{Tab:iteration}, values of $\bar{h}$ associated with the  controller gain $K$ iteratively found by \cite[Theorems 4-5]{Berberich2020}, Theorems \ref{Th2} and \ref{Th:Co-designing:AB:F}, as well as Theorem \ref{Th3} are displayed. There is a similar conclusion as in Table \ref{Tab:noise} that
the iterative approach using Theorem \ref{Th3}
enhances robustness to noise relative to \cite[Theorems 4-5]{Berberich2020} and Theorem \ref{Th2} with Theorem \ref{Th:Co-designing:AB:F}.
Except for this, by choosing a different controller gain, improved MSI bounds can be obtained by
the iterative approaches in Table \ref{Tab:iteration} when compared to the ones in Table \ref{Tab:noise} obtained by the fixed controller gain.
Table \ref{Tab:noise:k} depicts
values of $\bar{h}$ found by Theorems \ref{Th:Co-designing:AB}-\ref{Th:controller:A} with $\epsilon=2$. 
 Theorems \ref{Th:Co-designing:AB} and \ref{Th:controller:A} provide smaller $h$ values than Theorems \ref{Th2}-\ref{Th3}, which confirms  usefulness of the free matrix $F$ in reducing stability conservatism.
It is worth noting that,
at the price of additional conservatism with respect to matrix $F$, Theorems \ref{Th:Co-designing:AB} and \ref{Th:controller:A} offer convex LMI conditions. Different from Theorems \ref{Th2}-\ref{Th:Co-designing:AB:F} and \cite[Theorems 4-5]{Berberich2020}, matrices $K$ and $\Omega$ can be co-designed directly based on Theorems \ref{Th:Co-designing:AB} and \ref{Th:controller:A}.

\begin{table}[t]
\color{blue}
\caption{Maximum allowable $\bar{h}$ with $\underline{h}=10^{-5}$ and delay $d=0$ for different noise bounds $\bar{w}$ by iterative approaches.}
\begin{center}      
\begin{tabular}{lccccccccc}
\hline\noalign{\smallskip}
$\bar{w}$ & 0.005 & 0.01 & 0.02 & 0.03&0.04& 0.05\\
\noalign{\smallskip}\hline\noalign{\smallskip}
Theorems \ref{Th2} and \ref{Th:Co-designing:AB:F}& 23.4  & 14.7 &5.2 & 3.4 &2.1 & 1.9\\
\cite[Theorems 4-5]{Berberich2020}&28.5&13.8&6.3&4.0&2.9&2.2\\
Theorem \ref{Th3} &32.2 & 18.8 & 8.1 &   4.8  &  3.5  &  2.6\\
\noalign{\smallskip}\hline
\end{tabular}
\end{center}
\label{Tab:iteration}
\end{table}

\begin{table}[t]
\color{blue}
\caption{Maximum allowable $\bar{h}$ with $\underline{h}=10^{-5}$ and delay $d=0$ for different noise bounds $\bar{w}$ and the optimized controller gain $K$.}
\begin{center}      
\begin{tabular}{lccccccccc}
\hline\noalign{\smallskip}
$\bar{w}$ & 0.005 & 0.01 & 0.02 & 0.03&0.04& 0.05\\
\noalign{\smallskip}\hline\noalign{\smallskip}
Theorem \ref{Th:Co-designing:AB}&11.6&  8.4&  4.2&   2.7&  1.9 & 1.6\\
Theorem \ref{Th:controller:A}&14.0&  10.9& 5.1&  3.2&  2.0&  1.8\\
\noalign{\smallskip}\hline
\end{tabular}
\end{center}
\label{Tab:noise:k}
\end{table}

\textcolor{blue}{(\emph{Co-designing under unknown $A$ and $B$})} For data-driven control of system \eqref{sys:sampling} with unknown matrices $A$ and $B$ under our dynamic transmission scheme \eqref{sys:trigger}, we employ Theorem \ref{Th:Co-designing:AB} to co-design the controller gain $K$ and
triggering matrix $\Omega$. We consider constant transmission {\color{blue}delay bounds $\underline{d}=0$, $\bar{d}=0.1$}, sampling interval $h=0.2$, triggering parameters $\sigma_1=0.4$, $\sigma_2=0.1$, $\theta=2$, $\lambda=0.5$, $\epsilon=2$, noise $w(t)\in [-0.01,~0.01]$, and the available measurements $\{\dot x(T_i),~x(T_i),~u(T_i)\}^{100}_{i=1}$ with the interval $T_{i+1}-T_i$ satisfying \eqref{example:sampling}.
Solving the data-based LMIs \eqref{Th:controller:AB:LMI1} and \eqref{Th:controller:AB:LMI2},
the controller gain and the triggering matrix were co-designed as follows
{\color{blue}\begin{align*}
\Omega&=\left[
  \begin{array}{cccc}
   218.2530 &  -59.7385\\
   -59.7385  & 95.9565\\
  \end{array}
\right]\\
K&=\left[
  \begin{array}{cccc}
    -0.1798 &  -3.7008
  \end{array}
\right].
\end{align*}}
We then simulated system \eqref{sys:sampling} under the triggering scheme in \eqref{sys:trigger} with initial condition $x(0) = [3,\, -2]^T$, as well as the dynamic variable $\eta(\tau_j-d)$, over the time interval $[0,30]$. Their trajectories are depicted in Fig. \ref{FIG:xt:AB}.
Evidently, both the system states and the dynamic variable converge to the origin,
which demonstrates the feasibility of our designed controller gain $K$ and the triggering matrix $\Omega$. Additionally, Fig. \ref{FIG:xt:AB} plots the evolutions of error function $\iota f_{c}$ and threshold $\iota f_{t}$, where $\iota:=\tau_j^3$ is a weighting function for visualizing triggers, and
\textcolor{blue}{\begin{align*}\label{threshold}
f_{c}&:=e^T(\tau_j-d)\Omega e(\tau_j-d),\\
f_{t}&:=\frac{1}{\theta}\eta(\tau_j-d)+\sigma_1 x^T(\tau_j-d)\Omega x(\tau_j-d)\\
&~~~~+\sigma_2 x^T(t_k-d)\Omega x(t_k-d).
\end{align*}}
It is evident from the plots that an event is triggered as soon as the triggering condition $\iota f_{c}$ exceeds
the threshold $\iota f_{t}$. \textcolor{blue}{At that triggering instant, the state $x(\tau_j-d)$ is transmitted to the controller, and the measurement error $e(\tau_j-d)$ is set to be zero in $\eta(\tau_j-d)$. Thus, consistent with the theoretical result in \eqref{Lemma1:1}, the error $\iota f_{c}$ is kept smaller than the threshold $\iota f_{t}$. }
It is also worth pointing out that only $20$ measurements were transmitted to the controller under our proposed triggering scheme \eqref{sys:trigger}, while $150$ measurements were sampled.
This validates the effectiveness of the proposed scheme in saving communication resources, while maintaining system stability.
\begin{figure}[t]\color{blue}
\centering
\subfigure{
\includegraphics[scale=0.6]{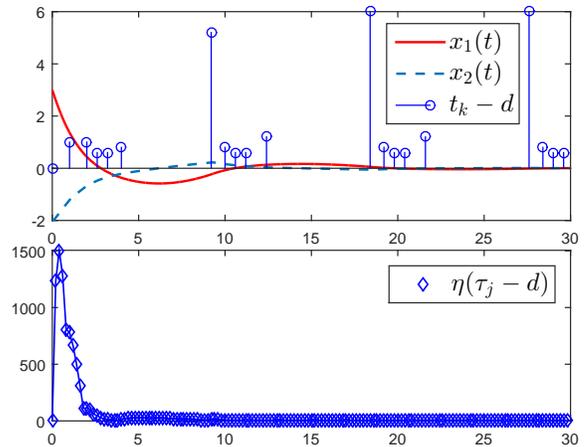}
}
\quad
\subfigure{
\includegraphics[scale=0.6]{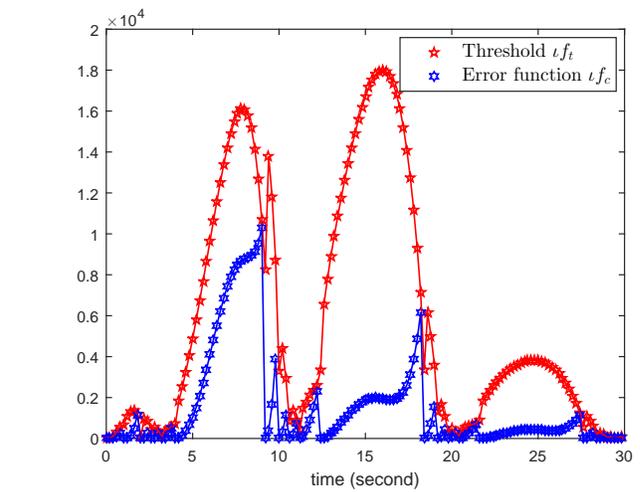}
}
\caption{Trajectories of system \eqref{sys:sampling}, dynamic variable $\eta(\tau_j-d)$, error function $f_{c}$, and threshold $f_{t}$ under triggering condition \eqref{sys:trigger} with parameters
$h=0.2$, $\sigma_1=0.4$, $\sigma_2=0.1$, $\theta=1$, $\lambda=0.5$, unknown matrices $A$ and $B$ with initial condition $x(0) = [3,\, -2]^T$, and delay $d=0.1$.}
\label{FIG:xt:AB}
\end{figure}

\begin{figure}[t]\color{blue}
\centering
\subfigure{
\includegraphics[scale=0.6]{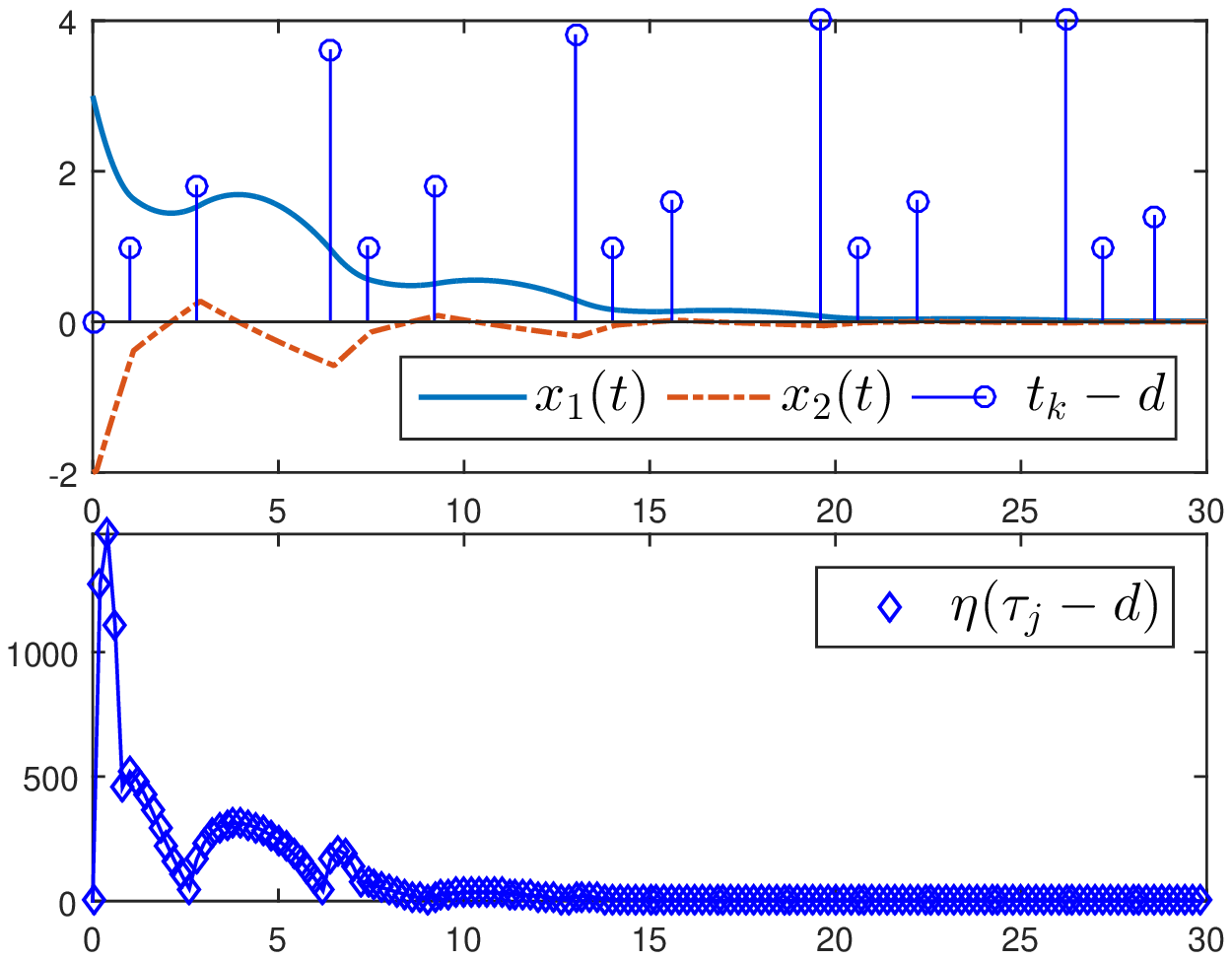}
}
\subfigure{
\includegraphics[scale=0.6]{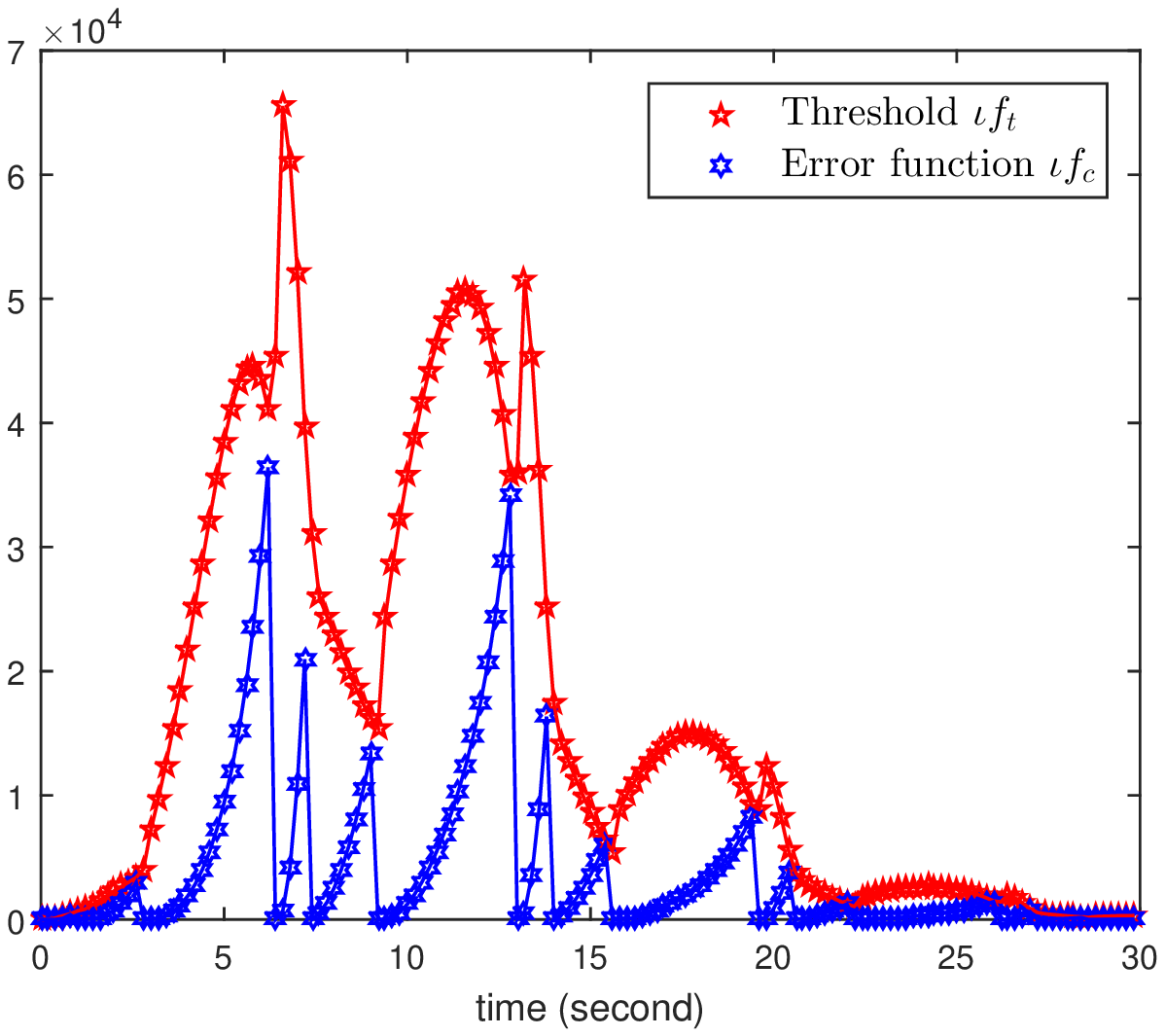}
}
\caption{Trajectories of system \eqref{sys:sampling}, dynamic variable $\eta(\tau_j-d)$, error function $f_{c}$, and threshold $f_{t}$ under triggering condition \eqref{sys:trigger} with parameters
$h=0.2$, $\sigma_1=0.4$, $\sigma_2=0.1$, $\theta=2$, $\lambda=0.5$, unknown matrix $A$ and known $B$ with initial condition $x(0) = [3,\, -2]^T$, and delay $d=0.1$.}
\label{FIG:xt:dy:A}
\end{figure}

\textcolor{blue}{(\emph{Co-designing under unknown $A$ and known $B$})} For the event-triggered system \eqref{sys:sampling} with unknown $A$ and known $B$ under the dynamic transmission scheme \eqref{sys:trigger},
we considered the same parameters as used in Fig. \ref{FIG:xt:AB}
and the available measurements $\{\dot x(T_i),~x(T_i),~u(T_i)\}^{100}_{i=1}$ with the interval $T_{i+1}-T_i$ satisfying \eqref{example:sampling}.
Solving the LMIs
\eqref{Th:controller:A:LMI1} and \eqref{Th:controller:A:LMI2},
the co-designed controller gain and triggering matrix leveraging Theorem \ref{Th:controller:A}
were computed as follows
{\color{blue}
\begin{align*}
\Omega&=\left[
  \begin{array}{cccc}
   135.0611 & -65.0218\\
   -65.0218 & 292.8338\\
  \end{array}
\right]\\
K&=\left[
  \begin{array}{cccc}
  -0.4101 &  -8.0184
  \end{array}
\right].
\end{align*}}
The simulation results are shown in Fig. \ref{FIG:xt:dy:A}.
Both the system states and the dynamic variable converged to the origin, while 15 measurements out of 150 sampled-data  were transmitted to the controller. Clearly, the co-design method in Theorem \ref{Th:controller:A} is effective for System \eqref{sys:sampling} with unknown system matrix $A$ and known matrix $B$.
Last but not least, by comparing Figs. \ref{FIG:xt:AB} and \ref{FIG:xt:dy:A}, the knowledge of $B$ allows to stabilize the system using fewer transmissions.

\begin{figure}[t]\color{blue}
\centering
\subfigure{
\includegraphics[scale=0.6]{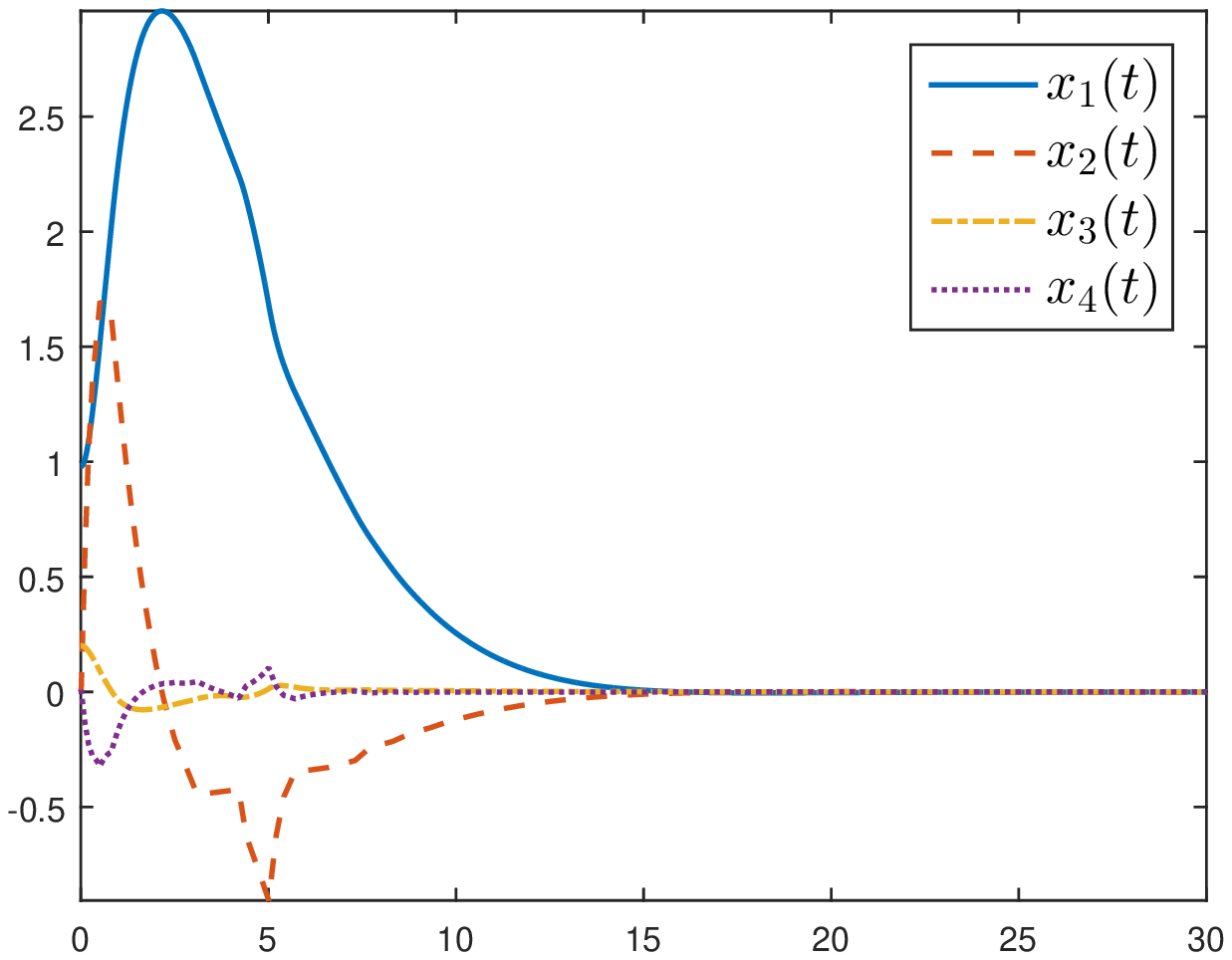}
}
\subfigure{
\includegraphics[scale=0.6]{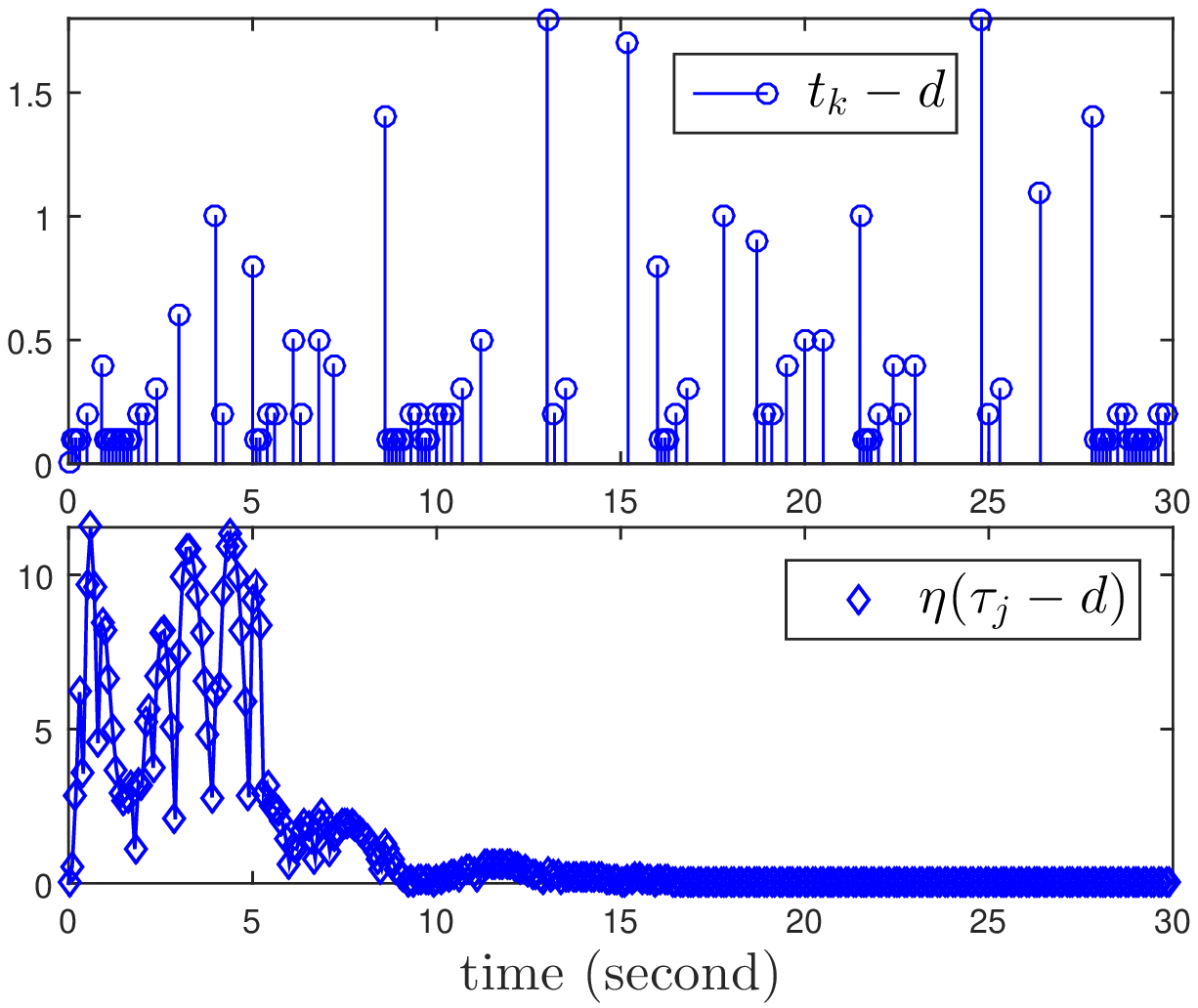}
}
\caption{Trajectories of system \eqref{sys:sampling}, dynamic variable $\eta(\tau_j)$ under triggering condition \eqref{sys:trigger} with parameters
$h=0.1$, $\sigma_1=0.01$, $\sigma_2=0.01$, $\theta=2$, $\lambda=0.5$, unknown matrices $A$ and $B$ with initial condition $x(0) = [0.98,~0,~ 0.2,~0]^T$, and delay $d=0.01$.}
\label{FIG:xt3}
\end{figure}
{\color{blue}
\begin{Example}
\emph{The second example involves a fourth-order system to further demonstrate the effectiveness of the proposed
 data-driven event-triggering scheme.
Consider a linearized inverted pendulum system \cite{Yue2013}, whose state-space representation is given by}
	\begin{equation*}
		\dot x(t)\!=\!\left[
		\begin{array}{cccc}
			\!0 & 1 & 0 & 0\! \\
			\!0 & 0 & {-m_1 g}/{m_2}& 0 \!\\
			\!0 & 0 & 0 & 1\!\\
		    \!0 & 0 & {g}/{l} & 0\!\\
				\end{array}
		\right]\!x(t)\!+\!
		\left[\begin{array}{cccc}
			\!0  \!\\
			\!{1}/{m_2} \! \\
			\!0 \!\\
		    \!{-1}/{(m_2l)} \!\\
				\end{array}
		\right]\!Kx(t)
	\end{equation*}	
\\
\emph{where $m_1=1kg $ is the mass of pendulum bob,  $m_2=10kg$  is the cart mass, $l=3m$ is the length of the pendulum arm, and $g=10m/s^2$ is the gravitational acceleration. The state variables $[x_1,~ x_2,~ x_3, ~x_4]^T$ are the cart's position, the cart's velocity, the pendulum angle, and the pendulum bob's angle velocity. The system matrices are assumed unknown, but we have $\rho=50$
measurements $\{\dot x(T_i),~x(T_i),~u(T_i)\}^{50}_{i=1}$ with the interval satisfying $T_{i+1}-T_i=0.1$, where the input was sampled uniformly from $u(t)\in [-1,~1]$.
The disturbance $w(t)\in [-0.001,\,0.001]$
fulfills Assumption \ref{Ass:disturbance} with $Q_d=-I$, $S_d=0$, and $R_d=0.001^2\rho I$ $(\rho=100)$.}
\end{Example}
We set
the sampling interval $h=0.1$, triggering parameters $\sigma_1=0.01$, $\sigma_2=0.01$, $\theta=2$, $\lambda=0.5$, delay bounds $\underline{d}=0$, $\bar{d}=0.01$, and the matrix $B_w=0.01I$.
Solving the LMIs \eqref{Th:controller:AB:LMI1} and \eqref{Th:controller:AB:LMI2} in Theorem \ref{Th:Co-designing:AB} with $\epsilon=2$,
the controller gain and the triggering matrix were co-designed as follows
\begin{align*}
K&=[
  \begin{array}{cccc}
 2.4891  & 11.9277 & 297.9385 & 166.4980\\
  \end{array}
]\\
\Omega&=\left[
  \begin{array}{cccc}
  19.8622 & -49.6572  & -4.7734  & 18.2200\\
  -49.6572  &196.0430 &  11.0141 & -61.6847\\
   -4.7734 &  11.0141  &  1.3070  & -4.4237\\
   18.2200 & -61.6847  & -4.4237 &  20.7782\\
  \end{array}
\right].
\end{align*}
We simulated system \eqref{sys:sampling} under the triggering scheme in \eqref{sys:trigger} with the initial condition $x(0) = [0.98,~0, ~ 0.2,~0 ]^T$, as well as the dynamic variable $\eta(t)$, over the time interval $[0,\,30]$. The results in Fig. \ref{FIG:xt3}
show that $x(t)$  and  $\eta(\tau_j-d)$ approach zero as $t\to\infty$
under the designed controller gain $K$ and the triggering matrix $\Omega$.
In addition, under the triggering scheme  \eqref{sys:trigger}, \emph{only} $94$ out of 300 sampled data were transmitted to the controller, corroborating the effectiveness of  our data-driven event-triggering scheme on the fourth-order system.}
\section{Concluding Remarks}\label{sec:conclude}
In this paper, we proposed a general dynamic event-triggering transmission scheme for sample-data control systems with communication delays.
As an intermediate result of independent interest, we also provide a novel looped-functional-based approach for event-triggered control with multiple advantages (reduced computational complexity, reduced conservatism) over existing methods in the literature.
Both model- and data-based stability conditions were developed using the looped-functional approach.
Meanwhile, methods for co-designing the controller gain and the triggering matrix were presented for both cases of having a known or  an unknown input matrix.
Finally, two numerical examples were provided to corroborate the role of additional prior knowledge of the input matrix in reducing conservatism of our data-driven stability conditions, as well as the merits and effectiveness of our methods.

\appendices

\bibliographystyle{IEEEtran}
\bibliography{cas-refs}

\end{document}